\newtheorem{pro}{Proposition}
\newtheorem{dfn}{Definition}
\newcommand{\orcid}[1]{\href{https://orcid.org/#1}{\textcolor[HTML]{A6CE39}{\aiOrcid}}}
\newcommand\correspondingauthor{\thanks{*Corresponding author.}}
\DeclareMathOperator*{\argmin}{arg\,min}
\begin{document}
\nolinenumbers
\title{Footprints of Data in a Classifier: Understanding the Privacy Risks and Solution Strategies} 


\author{Payel Sadhukhan*\correspondingauthor       \and
        Tanujit Chakraborty 
}


\institute{Payel Sadhukhan \at
              Techno Main Salt Lake, Kolkata, India\\
              \email{payel0410@gmail.com}   \\        
              ORCiD: 0000-0001-7795-3385
           \and
           Tanujit Chakraborty \at
             Sorbonne University, Abu Dhabi, UAE\\
             Sorbonne Center for AI, Paris, France\\
              \email{tanujit.chakraborty@sorbonne.ae}   \\
              ORCiD: 0000-0002-3479-2187
}

\date{Received: date / Accepted: date}

\maketitle

\begin{abstract}
The widespread deployment of Artificial Intelligence (AI) across government and private industries brings both advancements and heightened privacy and security concerns. Article 17 of the General Data Protection Regulation (GDPR) mandates the Right to Erasure, requiring data to be permanently removed from a system to prevent potential compromise. While existing research primarily focuses on erasing sensitive data attributes, several passive data compromise mechanisms remain underexplored and unaddressed. One such issue arises from the residual footprints of training data embedded within predictive models. Performance disparities between test and training data can inadvertently reveal which data points were part of the training set, posing a privacy risk. This study examines how two fundamental aspects of classifier systems—training data quality and classifier training methodology—contribute to privacy vulnerabilities. Our theoretical analysis demonstrates that classifiers exhibit universal vulnerability under conditions of data imbalance and distributional shifts. Empirical findings reinforce our theoretical results, highlighting the significant role of training data quality in classifier susceptibility. Additionally, our study reveals that a classifier’s operational mechanism and architectural design impact its vulnerability. We further investigate mitigation strategies through data obfuscation techniques and analyze their impact on both privacy and classification performance. To aid practitioners, we introduce a privacy-performance trade-off index, providing a structured approach to balancing privacy protection with model effectiveness. The findings offer valuable insights for selecting classifiers and curating training data in diverse real-world applications.

\keywords{GDPR Right to Erasure \and privacy of model \and privacy of training data \and data obfuscation \and vulnerability \and performance}
\end{abstract}

\section{Introduction}
\label{intro}

\begin{say}{According to Grand View Research, the global artificial intelligence (AI) market was valued at $93.5$ billion U.S. dollars back in 2021. From 2022 to 2030, the market is expected to grow at a compound annual growth rate (CAGR) of $38.1\%$. This growth can largely be attributed to the “continuous research and innovation directed by the tech giants who are driving the adoption of advanced technologies in industry verticals, such as automotive, healthcare, retail, finance, and manufacturing.”}\end{say}\footnote{https://www.microsourcing.com/learn/blog/the-impact-of-ai-on-business/}

At present, AI-driven classifiers have become indispensable to the functioning of enterprises and businesses \cite{ai-business}. In 2017, $JPMorgan \; Chase \; \& \; Co.$ used an AI tool called Natural Language Processing to review loan applications \cite{jpmorgan}, which saved 3600000 work hours of lawyers. Additionally, it led to a reduction in human errors as well. 
AI models use historical data about operations and render an informed forecast for an upcoming instance. The prediction models find utility in diversified domains like --- sales forecasting \cite{sales-forecast}, marketing response \cite{marketing}, supply chain optimization \cite{supply-chain}, fraud detection \cite{fraud}, customer base expansion \cite{customer1,customer2}, customer segmentation \cite{customer-segment}, customer semantic analysis \cite{sentiment}, churn prediction \cite{churn1}, and financial and investment analysis \cite{risk1,risk2}. It is expected to increase in the future in the context of healthcare support transfer \cite{health-support}, knowledge sharing in virtual communities \cite{virtual-community}, and implementation of smart-home \cite{smart-home} and smart-city paradigms \cite{smart-city}, among many others. 
\\ However, ethical and privacy concerns on the operations of AI-based systems often accompany the upsurge in their deployment \cite{ethics1,ethics2}. A prominent example of ethical concern on AI deployment is Amazon's AI-powered recruitment tool \cite{amazon-ai}. It was deployed to automate the recruitment-related screening. However, it was later found that the tool was discriminating against women as its training data (historical hiring data) was male-dominated. In another incident in 2024, a deepfake automated call impersonating the President of the United States back then was used to influence the election, in which the voters were urged to skip the presidential election. This particular incident involved an AI-based voice cloning and was an attack on individual privacy as well as the democracy of a nation \footnote{https://apnews.com/article/new-hampshire-primary-biden-ai-deepfake-robocall-f3469ceb6dd613079092287994663db5}. Privacy and integrity of data contributors and data owners are two fundamental challenges in this respect. The concerns lie with safeguarding sensitive information, proprietary processes, and customer data from unauthorized access and malicious intentions. The design of web-based platforms inherently supports myriad data collection, often without the consent of the rightful owners. This practice underscores critical issues surrounding privacy, user autonomy, and initiates a lack of ethics in the handling of personal information.

\begin{say}{According to IBM’s Cost of Data Breach Report 2023, the average cost of a data breach reached an all-time high in 2023 of 4.5 million US dollars. This represents a $2.3\%$ increase from the 2022 cost of USD 4.3 million US dollars}\end{say}\footnote{https://www.metacompliance.com/blog/data-breaches/5-damaging-consequences-of-a-data-breach}.
Expenses are incurred from compensating impacted customers, initiating incident response activities, conducting breach investigations, investing in enhanced security measures, covering legal fees, and facing substantial regulatory fines for GDPR non-compliance \cite{expenditure}. In addition to these, such breaches have far-fetched consequences from corporate perspectives, as they involve the \emph{trust} of its customers, which is an intangible resource for any concern \cite{breach-trust1,breach-trust2}. 
To this end, several government and industry standards have been put into place in recent years. The primary goal is to uphold the privacy of individuals \cite{gdpr,ccpa}. We quote from Article 17 of the GDPR, known as the Right to erasure (right to be forgotten).
\\ \emph{The data subject shall have the right to obtain from the controller the erasure of personal data concerning him or her without undue delay, and the controller shall have the obligation to erase personal data without undue delay where one of the following grounds applies:....\footnote{https://gdpr-info.eu/art-17-gdpr/}} 
\\ There can be several ways in which the privacy of individuals gets compromised, out of these, the two main categories are intentional and unintentional breaches \cite{breach}. The former is executed and attempted by malicious entities to achieve some materialistic gain \cite{attack-survey}. The latter is less known and is often caused by the design of the schemes and interfaces, which causes information leakage.
The training phase of a data model is a budding stage of one such breach. During training, information about the training data gets embedded in the model, which creates its footprints in the model, contributing to its vulnerability, which is exploited in the test phase through various types of attacks \cite{yu2014big}. Intruders may attempt to infer the training data \cite{training-data-attack} or the parameters of the model \cite{model-extraction1,model-extraction2}. The former is an attack on individual user data, while the latter is targeted at the revelation of metadata of a model. Alternatively, they can also focus on deceiving the classifier by rendering wrong predictions on query data \cite{adversarial-vuln}. In most cases, the attacks target information acquired during the training phase and are carried out by systematic queries to the model during the test phase. This research investigates the privacy and security concerns in AI deployment from the perspective of \emph{data footprints} left on a classifier.

Prelude to our research:
When a classifier shows a substantial difference in performance between training data and test data, favoring the training data, it can be inferred that the classifier has left its footprints in the model. 
Vulnerability of the classifiers induces a substantial violation of the Right to Erasure act of GDPR because footprints can compromise the integrity of user data, thereby affecting data owners. The attackers can also extract the meta-data of the model through their acquired knowledge of the training data. In this research, we carry out a comprehensive analysis of the vulnerability of different classifiers. The focus of this work is an exploration of the vulnerability of user data both theoretically and empirically. 
The next segment of our research focuses on exploring the degree of vulnerability shown by a large variety of classifiers.\\ 

\noindent \textbf{\emph{RQ: Does a given training data show similar vulnerabilities across different classifiers? Do all types of data exhibit similar vulnerabilities when used to train a classifier?}}
Understanding the vulnerabilities of classifiers can help practitioners make informed decisions. It can enable the business concerns to choose the optimal model that makes a trade-off between privacy preservation and performance. To gain a broader perspective of classifier vulnerability, several aspects need to be delved into. Firstly, we would analyze whether different classifiers show similar vulnerabilities if they show differential vulnerability, and whether there is a relation between vulnerability and overfitting of the classifiers. As training data is an integral component of a classifier, we would explore if an association exists between training data quality and vulnerability of the classifier that it trains. In this research, we conduct different studies to cater to these aspects. We train different classifiers on a given set of data and explore their vulnerabilities. It is also explored if the \emph{classifiers with low overfitting} tend to show lower vulnerabilities. Additionally, we explore datasets of diversified quality on each classifier. Our research indicates the strong presence of a vulnerability in several classifiers, such as the Decision Tree, Random Forest, k-Nearest Neighbor, XGBoost, and Multi-layer perception (artificial neural network). On the contrary, classifiers like Logistic Regression, AdaBoost, Gaussian Naive Bayes, and SVM-like Stochastic Gradient Descent have shown lower vulnerability. Our study also shows that vulnerability present in a classifier is dependent on the underlying training data. When the classes in a dataset possess class imbalance and distributional shift, the vulnerability tendency of the vulnerable classifiers is accentuated. On the contrary, a dataset with equitable class distribution and lacking specific identifiable patterns has been shown to induce lower vulnerability. When deploying trained models in a shared domain, practitioners can leverage the acquired insights to determine the most suitable classifier for end users. This information can serve as a tangible measure to assess the quality and usability of the datasets. It can benefit both buyers and vendors in a data marketplace -- the former can competitively price the datasets based on their vulnerability, while the latter can gain clarity on how well a dataset satisfies the need for privacy \cite{privacy-price}.

The exploration of classifiers and data vulnerability raises some issues that need to be addressed. Our study identifies some \emph{vulnerable} classifiers among the ones explored, and we apply extant ways to reduce the vulnerability of the classifiers. To this end, we explore the usability of data obfuscation - the systematic masking of sensitive attributes to reduce identifiable information without compromising functional values. We investigate the extent to which the data obfuscation schemes remove the footprints of data and mitigate the vulnerability of the {vulnerable} classifiers. We propose to obfuscate the training data before training a classifier. The empirical results show that obfuscation helps reduce the footprints of data, thereby lowering the vulnerability of the classifiers. Now the question arising is, Does masking the training data lessen the efficiency of a model and lead to a fall in performance?  
The last segment of our research is devoted to answering this question.
The goal is to capture the effect of data obfuscation on vulnerability and performance. We define \emph{privacy performance trade-off} to quantify and integrate the change in vulnerability and performance in a single variable following the vulnerability treatment. It measures the admissibility of the changes. The empirical results show that obfuscation causes a fall in the performance of the models, but the degree of change falls within the tolerance level.
\\ In summary, the differential behavior of classifiers on training and test data serves as the foundational context in the study of footprints of data in a classifier. In the main segment of this study, we conceptualize the vulnerability of classifiers and try to find the root causes of vulnerability -- whether it is the architecture of classifiers, the data, or both. A theoretical analysis is presented. Further, an experimental study is conducted, and the outcomes indicate the contribution of both. Further studies analyze the role of data obfuscation in reducing vulnerabilities of classifiers and the resultant fall in predictive performance through the formalization of a definition of privacy-performance trade-off.
\subsection{Organization of the paper}
The rest of this paper is organized as follows. In the next section, we describe the extant works on privacy aspects. In Section 3, we elaborate on the proposed methodology. Sections 4 and 5 elaborate on the experimental components and experimental design. In Sections 6 and 7, we present the results of the experiments and their discussion, respectively. We conclude the paper in Section 7.

\section{Literature Review}
\label{sec:lit}

Privacy is of utmost importance in the digital age; research in this domain traces back to the days of the inception of the internet and the web. Over the decades, the focus on safeguarding personal information and critical web data has amplified with the evolution of technology.  
Privacy is critical for ensuring an ethical environment and sustainability \cite{privacy-sustainability1}. The deluge of ML has called for an increasing need for privacy measures and has prompted nations, bodies, and establishments to formalize privacy requirements on the web \cite{eu-Goodman,eu-privacy}.
Studies manifested the need for efficacious strategies to secure personal data and maintain the integrity of tailored environments \cite{privacy-personalization}. It deals with the protection of sensitive information of the data owners and their privacy rights \cite{privacy-rights1,privacy-rights2}, lowering of data breach risks \cite{data-breach-risk}, and support for ethical AI operations \cite{ethical}. Moreover, as machine learning models increasingly serve as critical organizational assets, safeguarding their metadata has become a vital priority \cite{ml-privacy}. Its security is crucial to maintain competitive advantage and operational integrity.

Initial research in this domain dealt with the theorization of privacy and related concepts \cite{privacy-need,privacy-ethics}.  It involved segregation of different dimensions of privacy, such as unauthorized data collection, use of authorized data in an unintended context, risks of other improper access, and error. The formalization of definitions led to a clear understanding of what is required to address privacy challenges effectively. Early research predominantly concentrated on safeguarding data at rest and in transit, addressing concerns such as unauthorized access, data breaches, and identity disclosure. However, with the advancement of time and the emergence of black-box attacks, \emph{model privacy} also emerged as a necessity \cite{model-privacy}. The shift is driven by an adversary's ability to extract two categories of knowledge -- the metadata of a model, like architecture or model parameters, or the training data used to build the model, or both \cite{blackbox-is}. In both types of attack, the adversary has no direct access to the ML model's design and can only interact with it by submitting test queries. The attacker can gain admissible access to the metadata as well as the training data by inferring from the prediction on carefully crafted test points. With the emergence of cloud computing, as the sharing of models and data became the call for the data, several newer forms of attacks emerged. This new class of attack is a passive attack on the training data as well as the metadata of the model by making naive and permitted queries to the classifiers. This class of attack has a passive modus operandi and occurs without any active breaching effort from the trespasser; hence, it is difficult to detect and mitigate \cite{black-survey}.

On the other side of the same coin, technical researchers have focused on rendering diversified solutions to uphold privacy constraints. Recent advancements have introduced various privacy-preserving techniques for classification, data search, and clustering \cite{privacy-survey,privacy-survey1}. The privacy-preserving solutions depend on the types of threats they aim to mitigate \cite{cyber-threat}. Primitive solutions rendered privacy by anonymizing the sensitive attributes of the data \cite{data-privacy}. This class of methods proved helpful in upholding the integrity of data residing in a shared platform. A fundamental class of safeguards includes cryptographic approaches like homomorphic encryption, elliptic curves, and blockchain \cite{crypto-homo,crypto-elliptic,crypto-blockchain,heidari-blockchain1,heidari-blockchain2}. However, their application to real-world problems is often questionable because of their operations' time and memory requirements. Alternately, the frameworks such as k-anonymity, l-diversity, and t-closeness provided a more feasible way of defining and upholding the privacy of the datasets. In particular, k-anonymity requires the presence of at least k points for a particular attribute, thereby preventing the passive identification of data points \cite{k-anonymity}. The limitations of k-anonymity can be addressed through l-diversity. It requires the presence of at least l instances of a sensitive attribute in each equivalence class of the dataset \cite{l-diversity}. On the other hand, t-closeness calls for a parity between the distribution of a sensitive attribute in the equivalence classes and the entire dataset \cite{t-closeness}. As cloud computation gained prominence \cite{cloud-internet1}, shared computation and shared access of data became a necessity, spurring varied privacy-protection approaches on the platform \cite{heidari-cloud1,darbandi-cloud1,darbandi-cloud2}. The privacy requirements are intensified in a distributed domain, which often deals with sensitive applications like healthcare \cite{distribute-challenges}, energy consumption \cite{energy-consump}. Privacy of data is often ensured through the application of a data obfuscation technique that emphasizes preservation of both computability and privacy of datasets \cite{obfuscation,obfuscation1}. They modify and garble some of the aspects of any dataset by making it illegible to trespassers while preserving its accessibility and usage for valid users. Adopting privacy-enhanced solutions often renders suboptimal computational output in terms of both performance and speed. A deployable solution needs to have a suitable trade-off between the two \cite{privacy-tradeoff}. In recent times, deepfake data generation \cite{heidari-deepfake1,heidari-deepfake2} and generative AI \cite{heidari-chatgpt} caused several privacy breaches, and standard protocols are required to tackle the issue.

Machine Learning-based classifiers are critical assets for any organization \cite{ml-asset}, but they also present dual threats. The exposure of their parameters can lead to direct financial and competitive losses, as adversaries can replicate the model to offer identical services \cite{ml-loss}. Simultaneously, the compromise of the underlying training data can result in equally significant harm, violating user privacy, breaching social and ethical standards, and potentially leading to legal and reputational consequences \cite{data-asset1}. Classifiers of varying types exhibiting a varying degree of information loss.

To this end, we carry out this research to explore this tendency in an array of diversified classifiers.

\begin{figure*}[htbp]
    \hspace*{-1cm}  
    \includegraphics[scale=0.4]{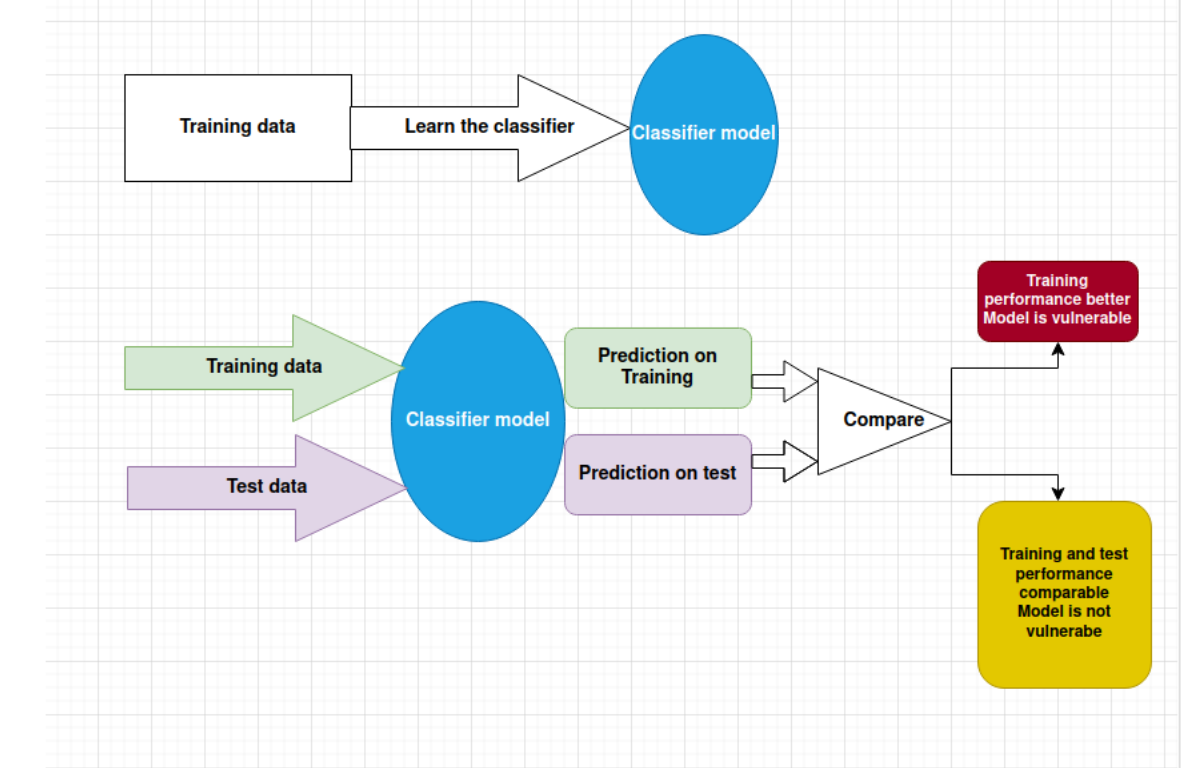}
    \caption{Intuition behind vulnerability estimation. Training data is used in building a classifier. When the predictions obtained for the training data and the test data show similar veracity, the classifier is said to be not vulnerable. If the veracity of prediction is substantially higher for the training data, the classifier is identified as vulnerable.}
\end{figure*}

\section{Methodology}
\label{sec:meth}
In recent years, business ventures employed classifiers to automate substantial segments of internal and external operations \footnote{https://www.ust.com/en/insights/revolutionizing-document-classification-with-trainable-ai-models} \cite{ai-changing}. These models are developed or \emph{trained} on one or several datasets, which are usually known as \emph{training data}. The data used to build classifiers is a valuable asset \cite{data-asset} — not only for the businesses that rely on it, but also for the individuals whose personal information is contained in it. Both personal data and the classifier's meta are critical, and their protection is essential, as it impacts both privacy and competitive advantage \cite{privacy-data-asset} The emphasis of this work is privacy of \emph{data used in training (personal data)}, which can get compromised through the vulnerability of the classifier which it trains.
\subsection{Recognizing the vulnerabilities}
\begin{itemize}
\item \textbf{Vulnerability of a classifier}: 
Classifiers are multifaceted mathematical frameworks designed to generalize patterns from data while minimizing a defined error function \cite{classifier-statistics}. The error function quantifies the disagreement between the model's predictions and the true labels in the training set, serving as a guide for optimizing the model. However, the model itself encompasses aspects of architecture (and algorithm) and their related parameters and hyperparameters, and also the training data that is used to develop the model \cite{hastie-book}. During training, the algorithms iteratively adjust the model's parameters to minimize this error function on the training data. However, the goal is not just to memorize training data. Instead, good classifiers are supposed to balance minimization of training data error with generalization to unseen data, which is formalized through techniques like regularization, cross-validation, and bias-variance trade-offs \cite{islr}.

A classifier $f_\theta: \mathcal{X} \to \mathcal{Y}$ with parameters $\theta \in \Theta$ is trained by minimizing an error function $L(\theta)$ over a training dataset $\mathcal{D}_{\text{train}} = \{(x_i, y_i)\}_{i=1}^N$:

\begin{equation}
\theta^* = \argmin_{\theta \in \Theta} L(\theta) = \argmin_{\theta \in \Theta} \frac{1}{n} \sum_{i=1}^n \ell(f_\theta(x_i), y_i) + \lambda R(\theta)
\end{equation}

where,
\begin{itemize}
    \item $\ell: \mathcal{Y} \times \mathcal{Y} \to \mathbb{R}^+$ measures the loss incurred per data point.
    \item $R(\theta)$ is a regularization term which controls the trade-off between embedding on training data and capability to generalize on test data.
    \item $\lambda \geq 0$ controls the strength of regularization.
\end{itemize}

Here, $\theta^{*}$ defines the classifier. It is derived from the training data ($\mathcal{D}_{\text{train}}$) and the configuration of the model by repeated iterations of learning. As $\theta^{*}$ is dependent on the training data, there is a possibility that the model fortuitously includes details of the training data in the model, which the adversaries can exploit \cite{shokri-privacy}.

In the business domain, parties with conflicting interests can attempt to learn the training data of their adversaries. The parties can involve any curious individual attempting to learn sensitive parameters of a particular user(s) or a competing business venture. The interest of a competing business venture can be many-fold; it can attempt to destabilize the operation of its rival venture, or it may attempt to gain profit by learning the data of its rival. 
\\ A trespasser can attempt to guess and learn the training data without making an explicit query about it. This is possible by virtue of \emph{inference attacks} on the prediction model \cite{shokri-membership}. By obtaining the prediction of a particular data point, a trespasser can know whether it served as training data. In this scenario, it is essential to quantify the vulnerability of a set of training data points. Usually, the accuracy and correctness of the training data points are higher than that of the unseen (test) points. The trespassers use this information to make inferences about point memberships in the training set. We quantify and define the vulnerability of the training data on the basis of these intuitions. The definition is as follows.
    \begin{dfn}(Vulnerability of classifier $vul(\mathcal{D})$): It is quantified by the ratio of the prediction performance of the training data points to that of the data points for a dataset ($\mathcal{D}$).
        \end{dfn}
      \begin{equation}
          vul(\mathcal{D})=\frac{\textrm{Correctness of predictions on training data for $\mathcal{D}$}}{\textrm{Correctness of predictions on test data for $\mathcal{D}$}}
      \end{equation}
The more the difference in the correctness of prediction for the training set and the test set, the severe the vulnerability of the classifier. A higher vulnerability provides a favorable substrate for trespassers to guess the training data correctly and permits leakage of data. When the veracity of predictions is similar for training and test data, it is difficult for a trespasser to guess the training data points correctly. Figure 1 presents a visual representation of this idea. It shows that training data is used to model a classifier. Predictions are obtained for the training data as well as the test data from the classifier. If there is a substantial difference in performance between the training and the test data (with the training data being better), the classifier is said to possess vulnerability.
\\ We have to be judicious with the choice of assessment metric in different cases. In the generic scenario, the accuracy of classification provides admissible information. However, accuracy can be insufficient to deduce a picture when the dataset possesses irregularities like imbalance and missing features (there can be more). For example, accuracy can give misleading results on imbalanced data. The true picture can be given through metrics like minority class $F_1$, precision, and recall \cite{imbalance-precision}.

The two key aspects of a classifier are -- i] its training data, and, ii] its structure and configuration of parameters (for example, a Decision Tree classifier has a splitting criteria of decision making at each node, whereas a Neural Network-based classifier has a set of hidden nodes and weights to define the decision boundary between the classes). To probe the aspect of classifier vulnerability further, we investigate these two aspects. We seek answers to whether the underlying training data contributes to a classifier's vulnerability and/ or the framework of the classifier.
\item \textbf{Is a given training data's vulnerability the same across different models?}
\\ Let us consider two different classifiers, A and B. If the difference in training and test performances is low for classifier A than B, it is favorable to deploy A. Even though A and B are trained on the same training data, A will divulge less information about the training data than B, as A remembers fewer details about the training data. This exploration can provide ways to secure the training data with the available options (classifier options), without the involvement of any external add-on.

We carry out a detailed experimental study to find answers to the question. For each dataset in the study, an array of classifiers is tested for vulnerabilities on various datasets. The findings will elucidate the classifier framework's influence on the vulnerability of classifiers. 
\item \textbf{Is there a correlation between a classifier's interpretability and the training data's vulnerability?}
\\ Interpretability of predictive models is a noteworthy aspect nowadays. It refers to the ease with which humans can understand and also explain the decisions rendered by a machine learning or a statistical model \cite{interpretable-isf}. Privacy and vulnerability are also intertwined with human perception of a model. Against this backdrop, the correlation (possible) between classifier vulnerability and classifier interpretability is worth exploring. We carry out an empirical study to explore if there is a positive, negative, or no correlation between the two.
\item \textbf{Does a classifier exhibit similar vulnerabilities when trained on distinct datasets with varying distributions?}
\\ Since training data is a fundamental component of a classifier, we investigate the influence of its quality and distribution on a classifier's vulnerability. We investigate if the leakage of training data from a classifier (through test point queries) is dependent on itself. In this context, we present a theoretical analysis of the universal vulnerability of classifiers under class imbalance and distribution shift below.

\textbf{Universal vulnerability of classifiers under class imbalance and distribution shift} 

Despite their use in supervised learning problems, classifiers are inherently vulnerable to class imbalance and changes in data distribution between training and testing data. This is mainly applicable for classifiers trained under the empirical risk minimization (ERM) principle, which aims to minimize the average loss over the training dataset, assuming it reflects the true data distribution \cite{chakraborty2020superensemble}. Classifiers that follow the ERM principle include logistic regression, support vector machines, decision trees, neural networks, k-nearest neighbors, naïve Bayes, and ensemble methods like random forests and gradient boosting \cite{shalev2014understanding}\cite{chakraborty2019nonparametric}.\\

\noindent Given a training set $\left\{\left(x_i, y_i\right)\right\}_{i=1}^n \sim P_{\text {train }}(X, Y)$, and a loss function $\ell(f(x), y)$, the ERM principle seeks a classifier $f \in \mathcal{F}$ (from a hypothesis class $\mathcal{F}$) that minimizes:
\begin{equation}
f^*=\arg \min _{f \in \mathcal{F}} \hat{R}_n(f), \quad \text { and } \quad \hat{R}_n(f)=\frac{1}{n} \sum_{i=1}^n \ell\left(f\left(x_i\right), y_i\right),
\end{equation}
where $\hat{R}_n(f)$ is the empirical risk (e.g., average training loss). Any classifier $f: \mathcal{X} \rightarrow \mathcal{Y}$ is trained using a training data distribution $P_{\text {train }}(X, Y)$ is expected to generalize under a potentially different test distribution $P_{\text {test }}(X, Y)$. The expected risk on the test distribution is given by:
\begin{equation}
R_{\text {test }}(f)=\mathbb{E}_{(X, Y) \sim P_{\text {test }}}[\ell(f(X), Y)],
\end{equation}
where $\ell$
is a bounded loss function (e.g., 0-1 loss). We now theoretically show that no classifier can guarantee consistent generalization performance unless it is robust to shifts in class priors, covariates, or label distributions, following \cite{ben2010theory}. Therefore, addressing imbalance and distributional shift is not merely a practical concern but a fundamental theoretical necessity in supervised learning. In Proposition \ref{prop}, we theoretically show that the expected test risk $R_{\text {test }}(f)$ can differ substantially from the training risk $R_{\text {train }}(f)$ under class imbalance, covariate shift, or concept shift, thus rendering any classifier theoretically vulnerable. Before stating the theoretical result, we mathematically discuss the conditions under which class imbalance, covariate shift, and concept shift cause universal vulnerability for any pattern classifiers. 

\begin{dfn}(Class Imbalance/Prior Shift)
     The class prior in training is skewed, i.e., $P_{\text {train }}(Y=1) \ll P_{\text {train }}(Y=0)$, and $P_{\text {test }}(Y)$ may differ from $P_{\text {train }}(Y)$, even if $P(X \mid Y)$ remains unchanged.
\end{dfn}

\begin{dfn}(Covariate Shift)
     The marginal distribution over features changes: $P_{\text {train }}(X) \neq P_{\text {test }}(X)$, while $P(Y \mid X)$ remains invariant. 
\end{dfn}

\begin{dfn}(Concept Shift)
     The conditional label distribution changes across domains:
$ P_{\text {train }}(Y \mid X) \neq P_{\text {test }}(Y \mid X).$
\end{dfn}
\noindent Let $\mathcal{X}$ be the input space and $\mathcal{Y}=\{0,1\}$ be the label space.Under the following types of shifts, we have the main theoretical result:
\begin{pro}\label{prop}
Let $f: \mathcal{X} \rightarrow \mathcal{Y}$ be a classifier trained via empirical risk minimization on $P_{\text {train }}(X, Y)$, and evaluated on $P_{\text {test }}(X, Y)$. Then, under any of the conditions defined above (Definitions 2-4), we have:
$$\left|R_{\text {test }}(f)-R_{\text {train }}(f)\right|>0$$
indicating degraded generalization performance due to the mismatch between training and test distributions. That is, classifiers trained using empirical risk minimization on the training distribution are not universally robust to class imbalance or distribution shift.
\end{pro}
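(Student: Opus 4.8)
The plan is to work with the population (expected) versions of both risks and to express their difference as a single integral against the signed measure $P_{\text{test}} - P_{\text{train}}$, then specialize the factorization of the joint law to each of the three shift types. Writing $R_{\text{train}}(f) = \mathbb{E}_{(X,Y)\sim P_{\text{train}}}[\ell(f(X),Y)]$ and $R_{\text{test}}(f)$ as in the statement, the key identity I would start from is
\begin{equation}
R_{\text{test}}(f) - R_{\text{train}}(f) = \int \ell(f(x),y)\,\bigl[dP_{\text{test}}(x,y) - dP_{\text{train}}(x,y)\bigr].
\end{equation}
Everything then reduces to showing that this signed integral is nonzero under each of Definitions 2--4.

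First, for prior shift I would factor the joint law as $P(x\mid y)\,P(y)$ and exploit the invariance of $P(X\mid Y)$. Introducing the class-conditional risk $L_y(f) = \mathbb{E}_{X\sim P(\cdot\mid y)}[\ell(f(X),y)]$, the difference collapses to $\sum_{y}\bigl(P_{\text{test}}(y)-P_{\text{train}}(y)\bigr)L_y(f)$, which for the binary label set becomes $\Delta\,(L_1(f)-L_0(f))$ with $\Delta = P_{\text{test}}(1)-P_{\text{train}}(1)$. This is strictly nonzero whenever the priors genuinely differ and the classifier does not incur identical per-class loss --- precisely the regime induced by training under imbalance, where the minority-class risk exceeds the majority-class risk.

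Next, for covariate shift I would factor as $P(y\mid x)\,P(x)$ and use the invariance of $P(Y\mid X)$. Defining the pointwise expected loss $g(x) = \mathbb{E}_{Y\mid X=x}[\ell(f(x),Y)]$, the difference becomes $\int g(x)\,\bigl[dP_{\text{test}}(x)-dP_{\text{train}}(x)\bigr]$, nonzero whenever $g$ correlates with the density mismatch. For concept shift I would hold the marginal $P(X)$ fixed and factor the conditional, obtaining $\int P(x)\,\Delta(x)\,\bigl(\ell(f(x),1)-\ell(f(x),0)\bigr)\,dx$ with $\Delta(x)=P_{\text{test}}(1\mid x)-P_{\text{train}}(1\mid x)$, again nonzero once the conditional actually changes on a set of positive mass where the loss separates the two labels.

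The main obstacle is the strictness of the inequality rather than its direction: each of the three integrals can vanish through fortuitous cancellation or a degenerate loss (for instance a constant $g$ under covariate shift, or $L_1(f)=L_0(f)$ under prior shift). I would therefore either state a mild non-degeneracy hypothesis --- the loss is non-trivial and the classifier is not perfectly calibrated across the shift --- or present the result as a genericity statement, with equality confined to a measure-zero set of pathological configurations. As a complementary control in the upper-bound direction, the $\mathcal{H}\Delta\mathcal{H}$-divergence decomposition of \cite{ben2010theory} can be invoked to confirm that the gap is governed by, and in general proportional to, the discrepancy between $P_{\text{train}}$ and $P_{\text{test}}$, thereby establishing that the distributional mismatch is both necessary for and quantitatively reflected in the train--test risk gap.
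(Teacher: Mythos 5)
Your proposal follows essentially the same route as the paper's proof: the same signed-measure identity $R_{\text{test}}(f)-R_{\text{train}}(f)=\int \ell(f(x),y)\,[dP_{\text{test}}-dP_{\text{train}}]$, followed by the same three factorizations of the joint law (prior shift via $P(y)P(x\mid y)$, covariate shift via $P(x)P(y\mid x)$, concept shift via the changed conditional), arriving at exactly the paper's three case-wise expressions. If anything, you are more careful than the paper on the one weak point --- the paper only asserts each integral is ``generally non-zero unless'' a degeneracy holds, whereas you make the needed non-degeneracy hypothesis (or genericity framing) explicit and add the $\mathcal{H}\Delta\mathcal{H}$-divergence bound of \cite{ben2010theory} as corroboration.
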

\begin{proof}{.}
We aim to show that under any of the three distributional shift conditions, the expected loss of the classifier differs between training and test domains.\\

\noindent Let $\ell: \mathcal{Y} \times \mathcal{Y} \rightarrow \mathbb{R}_{+}$be a pointwise loss function (e.g., 0-1 loss). Define the expected risk under any distribution $P$ as:
\begin{equation}
R_P(f)=\mathbb{E}_{(X, Y) \sim P}[\ell(f(X), Y)]=\int_{\mathcal{X} \times \mathcal{Y}} \ell(f(x), y) d P(x, y).
\end{equation}
We compare $R_{\text {test }}(f)$ and $R_{\text {train }}(f)$ by examining how the joint distributions $P_{\text {train }}$ and $P_{\text {test }}$ differ under each of the three cases.\\

\noindent Case 1 (Prior Shift / Class Imbalance):
Assume $P_{\text {train }}(Y) \neq P_{\text {test }}(Y)$ but $P(X \mid Y)$ remains invariant. Then, we have
\begin{equation}
P_{\text {train }}(x, y)=P_{\text {train }}(y) P(x \mid y) \quad \text{and} \quad P_{\text {test }}(x, y)=P_{\text {test }}(y) P(x \mid y).
\end{equation}
Hence, the difference in risk becomes:
\begin{equation}
R_{\text {test }}(f)-R_{\text {train }}(f)=\sum_{y \in \mathcal{Y}}\left[P_{\text {test }}(y)-P_{\text {train }}(y)\right] \int_{\mathcal{X}} \ell(f(x), y) P(x \mid y) d x.
\end{equation}
This is generally non-zero unless the loss term $\ell(f(x), y)$ is constant across all $y$, which is highly unlikely in practice, especially under class imbalance, where majority/minority label errors are systematically different.\\

\noindent Case 2 (Covariate Shift): Assume $P_{\text {train }}(X) \neq P_{\text {test }}(X)$, but $P(Y \mid X)$ remains unchanged. Then, we have
\begin{equation}
P_{\text {train }}(x, y)=P_{\text {train }}(x) P(y \mid x) \quad \text{and} \quad P_{\text {test }}(x, y)=P_{\text {test }}(x) P(y \mid x).
\end{equation}
Hence, the difference in risk becomes:
\begin{equation}
R_{\text {test }}(f)-R_{\text {train }}(f)=\int_{\mathcal{X}}\left[P_{\text {test }}(x)-P_{\text {train }}(x)\right] \sum_{y \in \mathcal{Y}} \ell(f(x), y) P(y \mid x) d x
\end{equation}
Unless the weighted loss $\sum_y \ell(f(x), y) P(y \mid x)$ is constant over $\mathcal{X}$, the above expression is non-zero for any mismatch between training and test covariate distributions.\\

\noindent Case 3 (Concept Shift):
Here $P(Y \mid X)$ itself changes, thus, we have
\begin{equation}
P_{\text {train }}(x, y)=P_{\text {train }}(x) P_{\text {train }}(y \mid x) \quad \text{and} \quad P_{\text {test }}(x, y)=P_{\text {test }}(x) P_{\text {test }}(y \mid x).
\end{equation}
Even if $P_{\text {train }}(X)=P_{\text {test }}(X)$, the difference in conditional distributions yields:
\begin{equation}
R_{\text {test }}(f)-R_{\text {train }}(f)=\int_{\mathcal{X}} P(x) \sum_{y \in \mathcal{Y}} \ell(f(x), y)\left[P_{\text {test }}(y \mid x)-P_{\text {train }}(y \mid x)\right] d x.
\end{equation}

\noindent This is generally non-zero when the classifier's predictions disagree with the new test conditional $P_{\text {test }}(y \mid x)$.\\

\noindent In all three cases, the test risk $R_{\text {test }}(f)$ differs from the training risk $R_{\text {train }}(f)$, implying that empirical risk minimization fails to guarantee robustness under common forms of distribution shift and imbalance.
\end{proof}

\noindent In practice, we can assume that
$
R_{\text {test }}(f) \geq R_{\text {train }}(f)
$
due to overfitting and distribution mismatch, especially in the presence of class imbalance or shift. 

\begin{remark}
    This proposition underscores a critical limitation of classifiers trained under the empirical risk minimization paradigm. In real-world settings, such distributional mismatches are common, as seen in Sec. \ref{exp_sec}. As a result, classifiers may exhibit severely degraded performance when deployed, especially on underrepresented groups, edge cases, or in dynamically changing environments \cite{shalev2014understanding}. This highlights the necessity for robust learning strategies such as reweighting, resampling, domain adaptation, or distributionally robust optimization to mitigate these vulnerabilities and ensure safe and fair deployment of machine learning models, which has been explored in \cite{chakraborty2020hellinger}\cite{cao2019learning}\cite{duchi2023distributionally}.
\end{remark}


\end{itemize}
    
\subsection{Extant ways to uphold privacy}  
Vulnerability of classifiers is a prevalent phenomenon that can be exploited by adversaries to gather information about training data. Given that, we must devise intuitive ways to reduce the training data's vulnerability. We explore data obfuscation techniques for the same.
\begin{itemize}
    \item \textbf{Data obfuscation for mitigation of classifiers' vulnerability}
    Data obfuscation techniques consist of a suite of privacy-preserving techniques to garble the sensitive attributes of a data without compromising its utility \cite{obfuscation}. They are particularly useful in scenarios where we need to share details but also run a potential risk of revealing sensitive aspects. Data obfuscation can be achieved in several ways --- perturbation, anonymization, masking, and randomization. Data obfuscation renders real-looking data to fool trespassers interested in learning the data. Data before and after obfuscation look similar; hence, it is often difficult for someone to distinguish between the two.

    In this research, we explore whether obfuscating the training data helps us hide the training data's signatures in the classifier. The goal is to find out the extent to which data obfuscation diminishes a classifier's capability to distinguish between the training data and test data. We would explore how the training data vulnerability, $vul(\mathcal{D})$ (defined in the previous section), changes with the incorporation of data obfuscation. We define ${vul(\mathcal{D})}_{change}$ in this regard. Let $vul(\mathcal{D})$ and ${vul(\mathcal{D})}^{obf}$ be the vulnerability obtained for a dataset using its regular and obfuscated version, respectively.
    \begin{dfn}(Change of vulnerability, ${vul(\mathcal{D})}_{change}$): It is quantified by the ratio of percentage change of $vul$ after obfuscation with respect to the $vul$ obtained on the regular (not obfuscated) data.
    \end{dfn}
      \begin{equation}
      {vul}_{change}=\frac{vul(\mathcal{D})-{{vul}_{obf}}(\mathcal{D})}{vul(\mathcal{D})}
      \end{equation}
       
      The range of ${vul}_{change}$ can vary between $-\infty$ to $\infty$. When ${vul}_{change}$ has a negative value, it signifies that data obfuscation worsened the training data's vulnerability. A positive value of ${vul}_{change}$ signifies a constructive development, as it indicates the lessening of the vulnerability. The zero value manifests the obfuscation technique's failure to render an improvement. Note that while computing, ${{vul}_{change}}(\mathcal{D})$, we consider the same classifier for consistency and training points for $vul(\mathcal{D})$ and ${{vul}_{obf}}(\mathcal{D})$ computation.
   
\end{itemize}
    
\subsection{Exploring the privacy-performance trade-off}
The privacy-performance trade-off forms a critical bottleneck in the design of deployable models on shared platforms \cite{privacy-tradeoff}. While it is necessary to share data that can render meaningful computational outcomes to all parties, protecting the data is also obligatory from the data owners' perspective \cite{privacy-necessity}. While ensuring data privacy, practitioners and model developers are often bothered about the loss in performance quality. We investigate the following contexts.
\begin{itemize}
    \item \textbf{Association between vulnerability and performance}: It is necessary to explore if some association exists between the degree of vulnerabilities and performance on a classifier. We need to explore further if two datasets with different $vul$ (on the same classifiers) render similar or differing performances; if yes, is there any association or not?
    \item \textbf{Comparing the performance of the test data with and without data obfuscation}:
    In this part, we propose to use data obfuscation techniques to reduce the vulnerability of the training data. Apart from the privacy aspect, we need to investigate whether data obfuscation degrades the predictive capability of a classifier. In the following definition, we try to capture the privacy-performance
    trade-off. We denote the performances of data $\mathcal{D}$ before and after obfuscation with $perf(\mathcal{D})$ and ${perf}_{obf}(\mathcal{D})$ and propose the following definition for quantifying the privacy-performance trade-off.
     \begin{dfn}(privacy performance trade-off ${PP}_{trade-off}(\mathcal{D})$): It is quantified by the product of the ratio of change in the data vulnerability before (${vul}(\mathcal{D})$) and after obfuscation (${vul}_{obf}(\mathcal{D})$) and the ratio of change in performance after (${perf}_{obf}(\mathcal{D})$) and before data obfuscation (${perf}(\mathcal{D})$).
     \end{dfn}   
      \begin{equation}
      {PP}_{trade-off}(\mathcal{D})=\left(\frac{vul(\mathcal{D})}{{vul}_{obf}(\mathcal{D})}\right) \times \left(\frac{{perf}_{obf}(\mathcal{D})}{{perf}(\mathcal{D})}\right) 
      \end{equation}
        The first component quantifies the change of data vulnerability due to the incorporation of data obfuscation; the more significant the decrease in vulnerability, the stronger the privacy-preservation. The second component quantifies the change in performance; the lesser the fall in performance, the better it is. Hence, we have taken the ratio in reverse order for the two. Consequently, the higher the value of ${PP}_{trade-off}(\mathcal{D})$, the better the outcome.
        \\ Various data obfuscation protocols are found in the literature \cite{obfuscation-review}. For a given data $\mathcal{D}$ and a classifier, the protocol can be chosen on the basis of ${PP}_{trade-off}(\mathcal{D})$ values. The practitioners can choose the obfuscation protocol that gives the highest ${PP}_{trade-off}(\mathcal{D})$ for a given data.
    
    We explore the association between (i) reduction of data vulnerability with data obfuscation and (ii) performance trend with data obfuscation.  
\end{itemize}
\section{Experimental components}\label{exp_sec}
Several experiments are carried out in this section to explore the vulnerabilities of the datasets and the classifiers. The methodologies discussed in the previous section are evaluated empirically and discussed here.

\subsection{Datasets}
The empirical study is carried out on five datasets: four (Body, Contraceptive, Customer, and Optdigits) are multiclass, and the remaining one (Churn) is binary and class imbalanced. There is a slight degree of imbalance among the classes of the Contraceptive and Customer datasets. On the other hand, the classes Body and Optdigits are well-balanced. The datasets are described below.

\begin{itemize}
    \item \textbf{Churn prediction (insurance) \footnote{https://www.kaggle.com/datasets/k123vinod/insurance-churn-prediction-weekend-hackathon}}: The dataset information about 33908 customer profiles with 16 features capturing insurance utilization patterns, serving as a binary classification dataset for churn prediction (policy cancellation/ non-renewal). There is an imbalance ratio of 8.55 between the churners and non-churners, where the latter outnumber the former. The feature set likely contains both continuous variables and categorical features, typical of a real-world dataset. The dataset source provides a training set and a test set, where the class memberships are known for the training partition, but only for the test partition. Hence, we have used the training partition for our analysis. Investigation of such datasets is important from financial and marketing perspectives as it can provide insights into actuarial risk modeling and formulating marketing strategies.
    \item \textbf{Contraceptive survey \cite{uci-contra}}:
    This dataset is a subset of the 1987 National Indonesia Contraceptive Prevalence Survey. This data collection is aimed at understanding and correlating the socio-economic and demographic factors with the choice of contraceptive method. The choice of contraceptives was --- i] no choice, ii] long-term method, and iii] short-term method. The data was collected from 1473 individuals, each corresponding to a non-pregnant or not known to be pregnant woman in Indonesia. There are nine features which provide socio-economic and demographic information, and we have both numeric and categorical features. The dataset is multi-class, and we have to classify each point (woman) into any one of these three classes -- no choice (class 1), long-term method (class 2), and short-term method (class 3) of their choice of contraceptive. The three classes possess a modestly unbalanced distribution with 629, 333, and 511 points in classes 1, 2, and 3, respectively.
    \item \textbf{Profitable customer segment prediction \footnote{https://www.kaggle.com/datasets/tsiaras/predicting-profitable-customer-segments}}: In marketing, finding the right target customer group for maximizing returns is a critical task. This dataset originates from an online retail company, which has compiled historical data on various customer segments. It tracks the profitability of each group following specific marketing campaigns and evaluates the effectiveness of investing in marketing for these groups in hindsight. There are 70 numeric features extracted from their market research findings for 6620 points, where each feature corresponds to only one group or a pair of groups. Each data point has information pertaining to the comparison of two different groups. It is a three-class dataset where the class of the datapoints predicts whether both groups are profitable, only one is profitable, or neither. The number of data points for three classes is 1893, 1321, and 1353, respectively. 
    \item \textbf{Body performance prediction\footnote{https://www.kaggle.com/datasets/kukuroo3/body-performance-data/data}}: This dataset was collected by the Korea Sports Promotion Foundation to explore the correlation between the body statistics of individuals (features) and their performances (class). The former consists of information like age, gender, height, weight, systolic and diastolic blood pressure, grip force, sit and bend forward distance, sit-up count, and broad-jump distance of an individual. Along with these, the performance of the individual is noted, and it can be classified into any of the four classes. There are 11 features and 13393 data points in this dataset. Each of the four classes has approximately equal shares of $25\%$ points (balanced dataset). This dataset contains the sensitive health parameters of many individuals.
    \item \textbf{Optical digits\footnote{https://archive.ics.uci.edu/dataset/80/optical+recognition+of+handwritten+digits}  \cite{optical}}: This dataset consists of features derived from handwritten digits, each of which belongs to exactly one of ten classes (where each class corresponds to a digit). There are 5620 points, each corresponding to the image of a handwritten digit. There are 64 features, and they are derived by breaking the image into segments and extracting their pixel-related information. This dataset is chosen for its highly balanced nature of representation of the classes.
\end{itemize}
We have considered several factors while choosing the datasets for this study -- i] binary/ multiclass nature, ii] the degree of balanced representation of points, iii] variability in feature cardinality, and iv] the ratio of points to features in a dataset. We have chosen four multiclass datasets and a binary dataset, which is considerably imbalanced (with an imbalance ratio of 8.55). Additionally, two of the four multiclass datasets show a high level of balance in the class distribution. The number of features varies between 9 and 70, and the ratio of points to features varies from 94 to 1028. These variabilities ensure a comprehensive assessment while mitigating any selection bias. While performing the experiments, a dataset is partitioned in the ratio 4:1; four parts are used for training, and the remaining part is used for testing. To explore the vulnerability and obtain the prediction on seen data, we consider a randomly selected $50\%$ fraction of the training set for prediction. 
\subsection{Classifiers}
All the experiments in this work involve assessing the predictive performance of the classifiers. To make an unbiased study, we considered several factors that influence the modus operandi of a classifier. While choosing the classifiers, we have taken an equitable share of white box classifiers, semi-white box classifiers, and black box classifiers \cite{white-black}. It is done to mitigate the interpretability bias of the classifiers, as there is a common perception that interpretable classifiers render suboptimal performance. We have explored nine classifier types \cite{classifiers} -- two with high interpretability (Decision Tree and Gaussian Naive Bayes), three with moderate interpretability (k-Nearest Neighbor, Logistic Regression, and Stochastic Gradient Descent), and four with low interpretability (Random Forest, XGBoost, AdaBoost, and MLP). These classifiers can also be segregated as --- tree-based (Decision Tree, Random Forest, AdaBoost, and XGB), non-linear (Gaussian Naive Bayes, k-Nearest Neighbor, and MLP), and linear (Logistic Regression and Stochastic Gradient Descent). We may note that tree-based classifiers render a non-linear decision boundary. Our analysis includes ten classifiers, comprising two versions of MLP (shallow and deep) along with eight other models. For all the classifiers, we have used their sklearn implementation in Python \cite{sklearn-classifiers}. The working principles of the classifiers and the parameters are given below.
\begin{itemize}
\item \textbf{Decision Tree \cite{dec-tree,scikit}}: Its modus operandi is based on recursive partitioning of the feature space into disjoint regions, with each division made based on the value of a particular feature. The process continues till optimized results are obtained or the stopping criteria are met. Gini impurity is used to split the nodes. We have set the maximum depth to 10 and the maximum feature size to 5.
\item \textbf{Random Forest \cite{random-org,scikit}}: It is an ensemble learning method where the base classifier is a decision tree. It combines the predictions of multiple individual trees to increase the robustness of the classifier system. We used Gini impurity to split the tree, and bootstrap samples were used. Like the previous classifier, the maximum depth and maximum feature size are set to 10 and 5, respectively. The number of estimators (trees) is set to 100.  
\item \textbf{XGBoost \cite{xgboost-org,scikit}}: It consists of an ensemble of Gradient Boosting Machines.  The learning framework is dependent on improving and optimizing past learning mistakes. It is designed to be immune to the overfitting problems of the decision trees. We have set the maximum depth as six and the minimum child weight as one, and all other parameters are set to default values and choices. 
\item \textbf{k-Nearest Neighbor \cite{knn-org,scikit}}: It classifies a test data point by taking a majority vote of its $k$ closest points, assigning the most common class among its $k$ nearest neighbors to the test data point. The neighborhood size $k$ is set to 5.
\item \textbf{Stochastic Gradient Descent \cite{stochastic-org,scikit}}: It consists of regularized linear models ( like SVM or logistic regression) with SGD training. It supports different types of loss functions and incremental learning, such as partial fit, to learn the data. We have used the Python implementation at its default configuration (hinge loss, l2 penalty, regularization strength = 0.0001, and maximum number of iterations = 1000).
\item \textbf{AdaBoost \cite{adaboost-org,scikit}}: It is an ensemble of classifiers. The original dataset fits the first classifier and the weights of instances for subsequent classifiers are adjusted according to the misclassification in the previous classifiers. We have used the Python implementation at its default configuration.
\item \textbf{Gaussian Naive Bayes \cite{gnb,scikit}}: It is a variant of the Naive Bayes Classifier with the additional assumption of Gaussian distribution of the features. It works well for a large number of points and features. We have used the Python implementation at its default configuration.
\item \textbf{Logistic Regression \cite{log-reg,scikit}}: It learns a probabilistic output of class membership by determining the relationship between the features and the predictor variables. The probabilities can be converted to class labels using a variety of methods. We have used the Python implementation with Liblinear solver.
\item \textbf{Multi Layer Perceptron \cite{mlp,scikit}}: It is a powerful but inherently opaque model. Its multi-layer architecture enables it to learn non-linear boundaries, but the latent parameters provide an inherent murkiness in its modus operandi. We have varied the number of layers and number of neurons to generate shallow and deep models, then studied their relation with model vulnerability. This experimental framework allows us to quantitatively assess the influence of architectural choices on a model’s susceptibility to adversarial attacks and data leakage. There are two versions: \emph{MLP-shallow} -- with a single hidden layer of 16 neurons, and \emph{MLP-deep} -- with 8 hidden layers consisting of 128, 64, 32, 32, 16, 16, 8, 8 neurons respectively. We have curated additional experiments to study the effect of the number of layers and the number of neurons on vulnerability.
\end{itemize}

\subsection{Evaluating metrics}
We have used \emph{accuracy} \cite{accuracy} to evaluate the performance of multiclass datasets, and \emph{average precision score} {\cite{avg-precision} for the binary, imbalanced dataset. We have preferred \emph{average precision score} over $F$ scores as the former captures the ranking of the positive instances over negative ones (even when the classifier fails to capture the rare instances) while the latter relies on yes/ no prediction.

\section{Design of experiments}

\subsection{Experiment 1:}
This is the main experiment of this research. In this experiment, we explore the vulnerability of the training data considering different datasets across various classifiers. There are three major goals of this experiment, which are described as follows:
\begin{enumerate}
    \item Does training data vulnerability exist? Are the classifiers differentially efficacious across the data they had seen during training and the data they had not? Vulnerability ($vul$) is calculated from Equation 2.
    \item Is a particular dataset equally vulnerable across different classifiers? Knowledge on this aspect can help practitioners select the privacy-preserving classifiers before deploying the models. For example, let the task be to choose a model (for deployment) that is trained on the \emph{Churn} dataset. A judicious choice for the practitioner would be to choose the classifier with the lowest $vul$ score. 
    \\ Further, can we segregate a classifier as \emph{vulnerable} or \emph{not vulnerable}?
    \item Is a classifier equally vulnerable on different datasets? Does it manifest similar $vul$ scores for different datasets? This analysis can help practitioners choose datasets that will train a deployable model, particularly when the choice of classifier is fixed. For example, if a kNN classifier has to be deployed, the practitioner has to select the dataset that renders the least vulnerability on the kNN classifier and train the required model.
\end{enumerate}

\subsection{Experiment 2:}
This experiment is designed to investigate the role of data obfuscation in curtailing classifier vulnerability. We explore both the \emph{vulnerable} and the \emph{not vulnerable} classifiers. We compare the vulnerabilities of (dataset, classifiers) pairs before and after applying data obfuscation and report the percentage change (Equation 12). \\ Two data obfuscation techniques are explored — (i) LSH-based encoding \cite{lsh} and (ii) Hamming encoding \cite{hamming}. We report their capability to reduce data vulnerability. LSH encoding is based on generating random hyperplanes in a given set of features. The points are projected onto the hyperplanes and given a 0/1 encoding based on their position with respect to each hyperplane. Hamming encoding employs a threshold-based binarization strategy. To encode a point with respect to a feature, the feature value is compared with the average; if it matches or exceeds the average, the feature value is encoded as 1, if not, it is encoded as 0. The process is repeated for all features. We have chosen these two protocols for data obfuscation as they do not require any additional information to generate a classifier \cite{shamsi-privacy}. Although the generated classifier is an approximate one (with respect to the one generated on the original dataset), its obfuscated training data remains untractable from an unauthorized user.

\subsection{Experiment 3:}
After exploring the role of data obfuscation in reducing the vulnerability, we explore a critical aspect related to its usability in practical scenarios. Data obfuscation can find relevance in predictive scenarios if the fall in performance is within an admissible range. To this end, we explore the classifier performances in the original data and the obfuscated data, and integrate them with the change of vulnerability. \emph{privacy performance trade-off} is used for the same (Equation 13). Privacy preservation on a dataset is concedable only when the application of obfuscation renders a high privacy-performance trade-off score.

\section{Analysis and Results}
In this section, we present the outcomes of the experiments on footprint detection. The results can equip the practitioners with insights into the sophisticated aspects of privacy preservation while designing classification models, choosing righteous datasets to train a model, and render a balanced choice in terms of performance as well as privacy.

\subsection{Experiment 1:} Figure 2 shows the fundamental result of this study: whether training data leaves footprints on the classifier. Each column of Figure 2 corresponds to a column and the data shows that training data leaves footprints on some particular classifiers, namely, Decision Trees, Random Forests, k-nearest neighbor, and MLP-deep. On these classifiers, for some datasets, the performances on training and test data have shown significant variation (with training data being better), which can be exploited by a trespasser to successfully infer the training data. For instance, if a classifier consistently predicts the true class for a cluster of points within a small neighborhood, it strongly suggests the presence of a training point in that region
The remaining five classifiers—Stochastic Gradient Descent, AdaBoost, Gaussian Naive Bayes, Logistic Regression, and MLP-shallow—have shown similar performances (accuracy or average precision) on training and test data across all datasets. In this scenario, it is not easy to single out the footprints of training data from the correctness of the predictions. 
On the basis of the empirical finding, we segregate the classifiers as \emph{vulnerable} (Decision Trees, Random Forests, k-nearest neighbor, and MLP-deep) and \emph{not vulnerable} (Stochastic Gradient Descent, AdaBoost, Gaussian Naive Bayes, Logistic Regression, and MLP-shallow).

Next, we explore the row-specific data to study the influence of datasets' characteristics on vulnerability. We have considered five datasets -- \emph{Churn}, \emph{Contraceptive}, \emph{Customer}, \emph{Body} and \emph{Optdigits}. Each row of Figure 2 corresponds to a dataset, and the analysis shows that the degree of vulnerability varies across datasets, as well. Three datasets -- Churn, Contaceptive, and Customer possess a distributional shift in their classes and have exhibited a notable amount of vulnerability across some of the classifiers. The moderate to severe imbalance in their class representation leads to a distribuional bias in the data itself. The bias leads to an overlearning of the training data, which an adversary can exploit to make inferences about the training data points. These three datasets have shown moderate to severe vulnerability across the vulnerable classifiers and low vulnerability on the non-vulnerable classifiers. Body and Optdigits, which have well-balanced classes, have shown the least vulnerability (even the vulnerable classifiers have shown low vulnerability (<1.4) for these two datasets). The equitable distribution of points mitigates bias and reduces vulnerability, irrespective of the classifiers they train. 

Five of the ten classifiers were identified as vulnerable, and they cover all three categories of interpretability -- high interpretability (Decision Tree), moderate interpretability (k-Nearest Neighbor), and low interpretability (Random Forest, XGB, and MLP-deep). In our study, we could not establish any strong association between interpretability and vulnerability of classifiers. When we consider the aspects of linearity, non-linearity, and tree-based construct, we found the linear classifiers (SGD and Logistic Regression) to be not vulnerable. Three (Decision Tree, Random Forest, and XGB) out of four tree-based classifiers and two (kNN and MLP-deep) out of three non-linear classifiers are vulnerable.

\begin{figure*}[htbp]
   \hspace*{-2.8cm}  
    \includegraphics[scale=0.5]{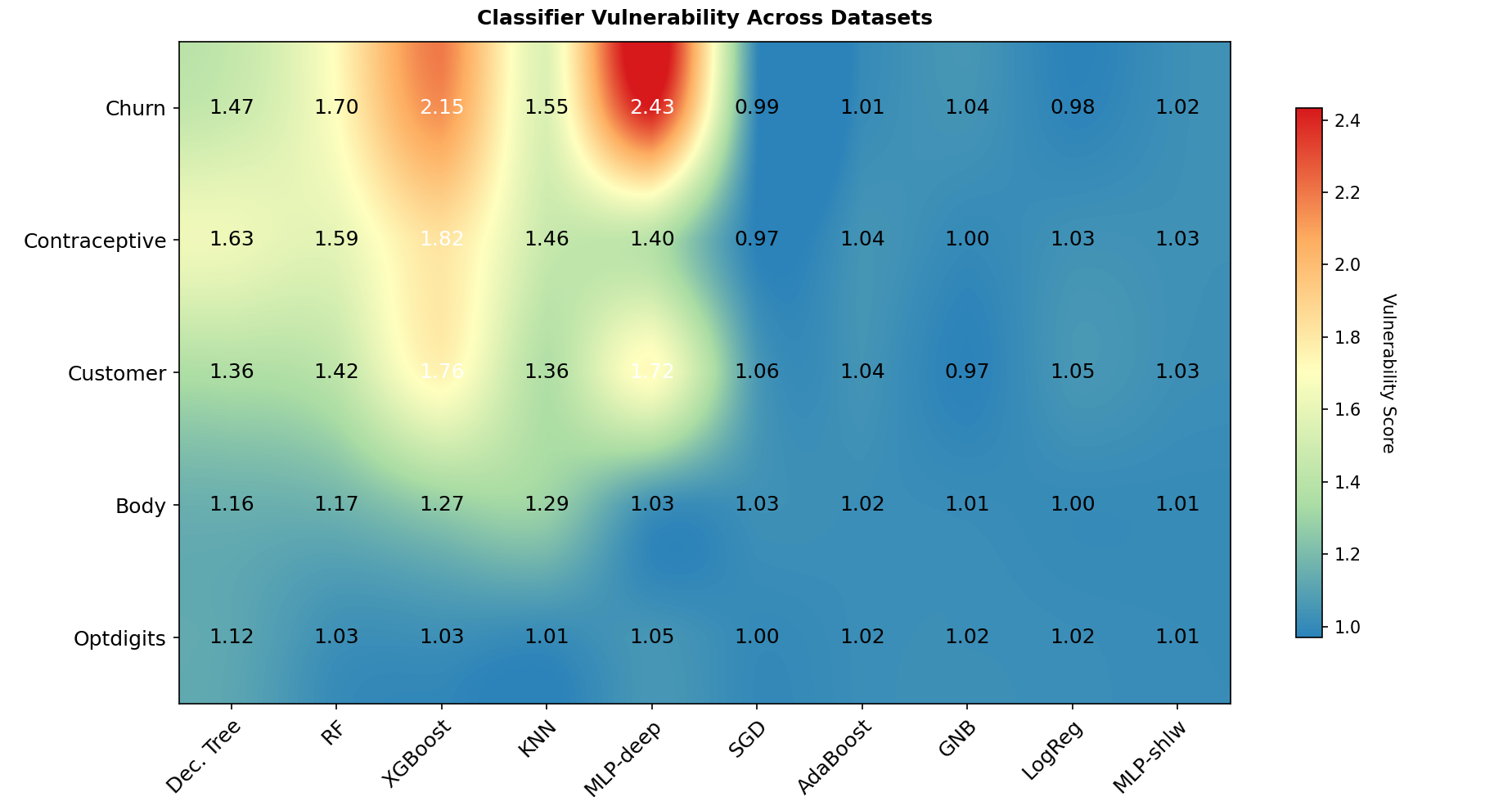}
    \caption{Vulnerability of classifiers on different datasets. The figure has a central component that shows the vulnerabilities, and the side panel shows the color legend. Each row of the main figure corresponds to a dataset, and each column corresponds to a classifier. The vulnerabilities of the imbalanced datasets are arranged in the top three rows (Churn, Contraceptive, and Customer), and the bottom two rows show the vulnerabilities of the balanced datasets. The vulnerability values of the vulnerable classifiers (Decision Tree, Random Forest, XGBoost, KNN, and MLP-deep) are reported on the left side of the table, and the non-vulnerable classifiers (SGD, AdaBoost, GNB, Logistic Regression, and MLP-shallow) are reported on the right side of the table.}
\end{figure*}

Our analysis shows that there are two factors which contribute to vulnerability of the classifiers -- i] the procedural framework of the classifiers, and ii] the distributional shift in classes of the datasets.

\subsubsection{Influence of number of layers and number of neurons on vulnerability of Multi Layer Perceptron (MLPClassifier)}
MLPClassifier's ability to model complex, non-linear relationships in data makes it a versatile choice for modern machine learning tasks.
Its architecture consists of multiple hidden layers and a varying number of neurons in each layer. Predictive power of an MLPClassifier is strongly dependent on these two parameters (along with other parameters like activation function and learning rates). We have investigated the association between vulnerability vs. number of neurons in layers (considering only one hidden layer) and vulnerability vs. number of layers. The number of neurons and the number of layers vary from 4 to 128, and 1 to 8, respectively. The investigation is carried out on all five datasets to ensure comprehensive evaluation.

The empirical results are illustrated in Figures 3-7. They show that the vulnerability of a classifier has a positive correlation with both the number of neurons and the number of layers for some of the datasets. The trend is evident in Churn, Contraceptive, and Customer. The variational shift and class imbalance of these datasets get embedded in the classifier, and it accentuates with increasing complexity of the classifier. However, this trend is absent in Body and Optdigits, and vulnerability shows zero correlation with the changing number of layers and neurons. These two datasets possess a balanced class distribution and are mostly devoid of distributional shift, which accounts for the uniformity in their training and test performances. 
\begin{figure*}[!h]
    \centering
    \subfloat[Vulnerability vs. number of layers\label{fig:layers}]{
        \includegraphics[width=\textwidth]{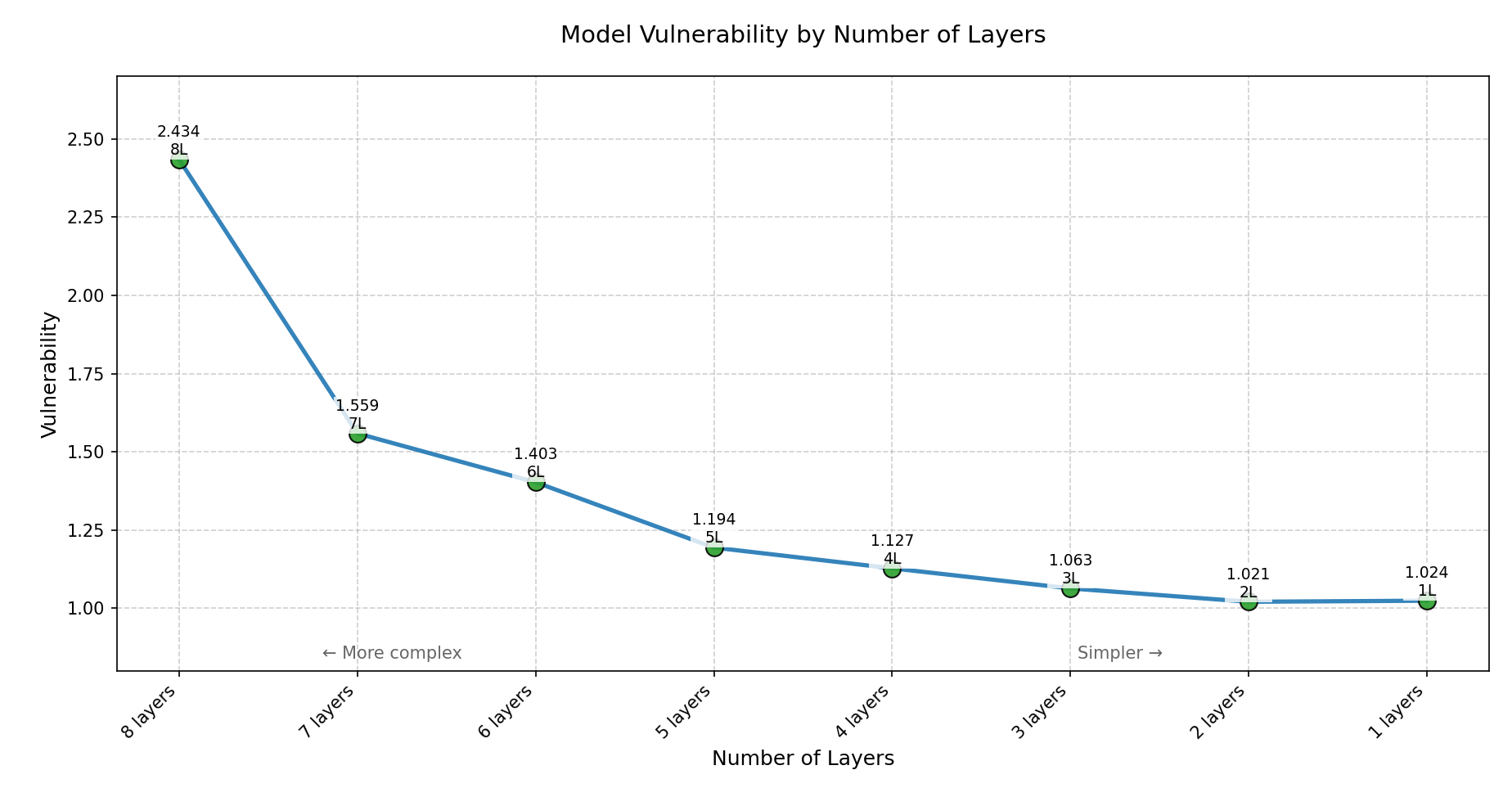}}
    
    \vspace{0.5cm} 
    
    \subfloat[Vulnerability vs. number of neurons\label{fig:neurons}]{
        \includegraphics[width=\textwidth]{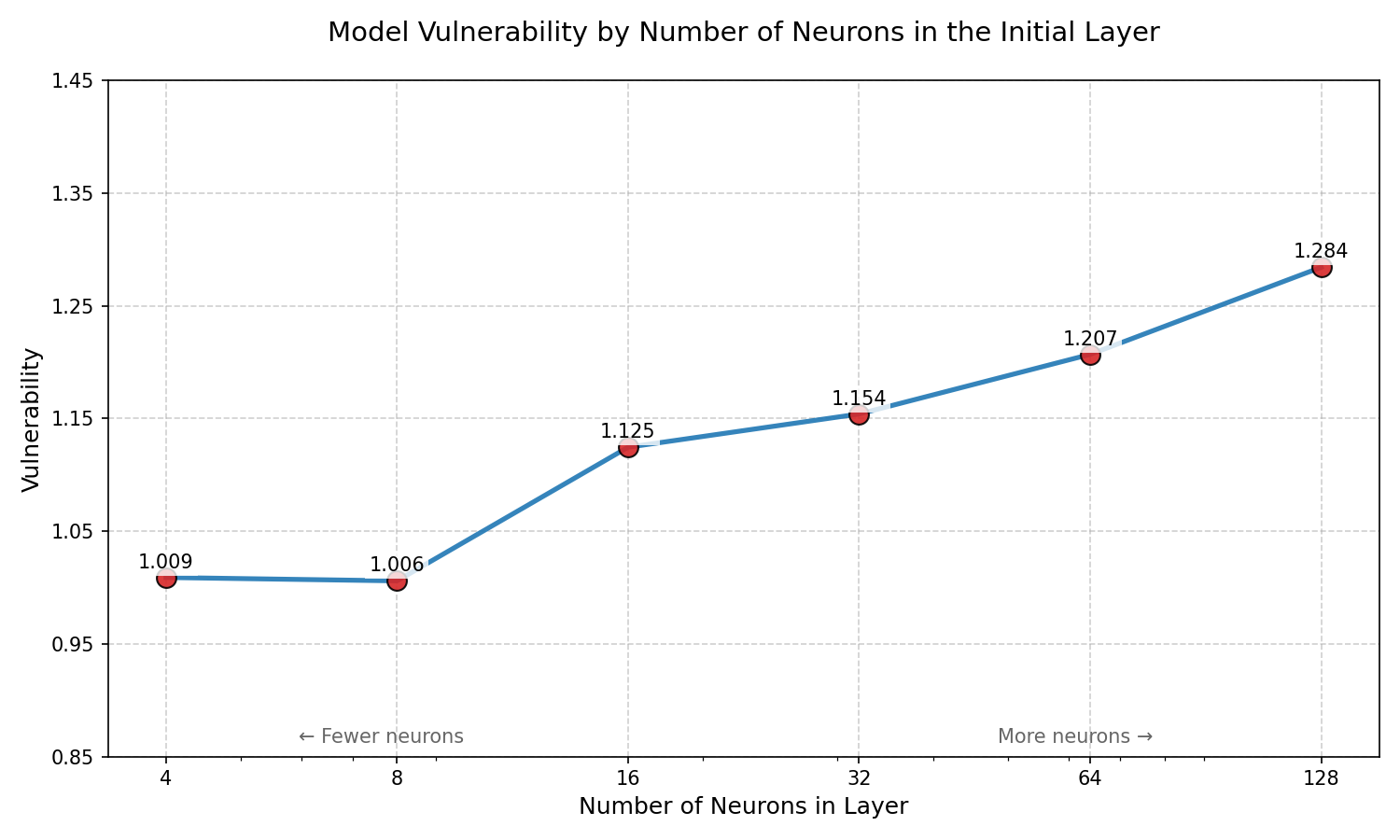}}
    
    \caption{Analysis of Churn dataset showing vulnerability as a function of (a) number of layers and (b) number of neurons. This dataset has a substantial amount of imbalance between its two classes. Bias in the data couples with the increasing number of neurons and the increasing number of layers, and contributes to a positive correlation of either with vulnerability.}
    \label{fig:churn_analysis}
\end{figure*}

\begin{figure*}[h]
    \centering
    \subfloat[Vulnerability vs. number of layers\label{fig:layers}]{
        \includegraphics[width=\textwidth]{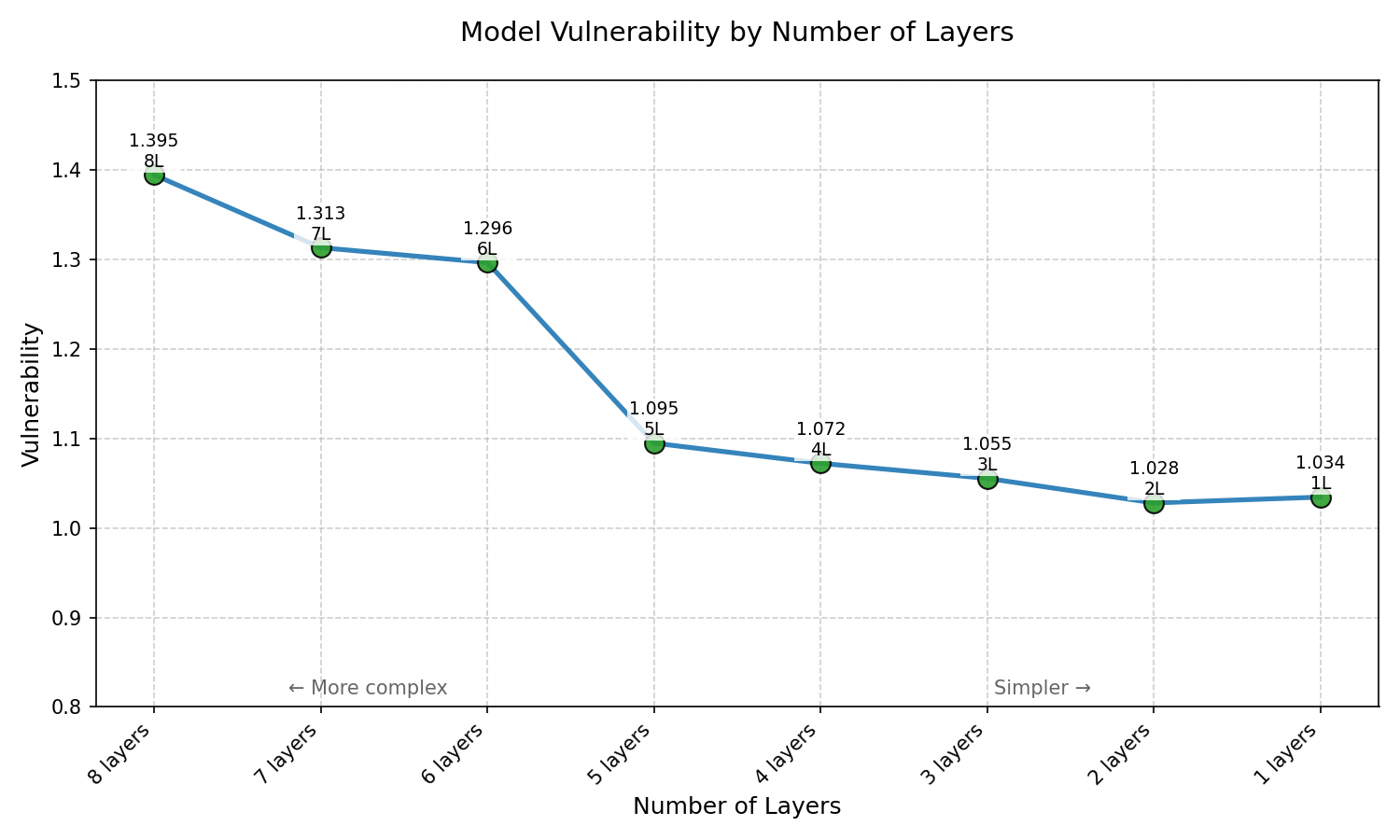}
    }
    
    \vspace{0.5cm} 
    
    \subfloat[Vulnerability vs. number of neurons\label{fig:neurons}]{
        \includegraphics[width=\textwidth]{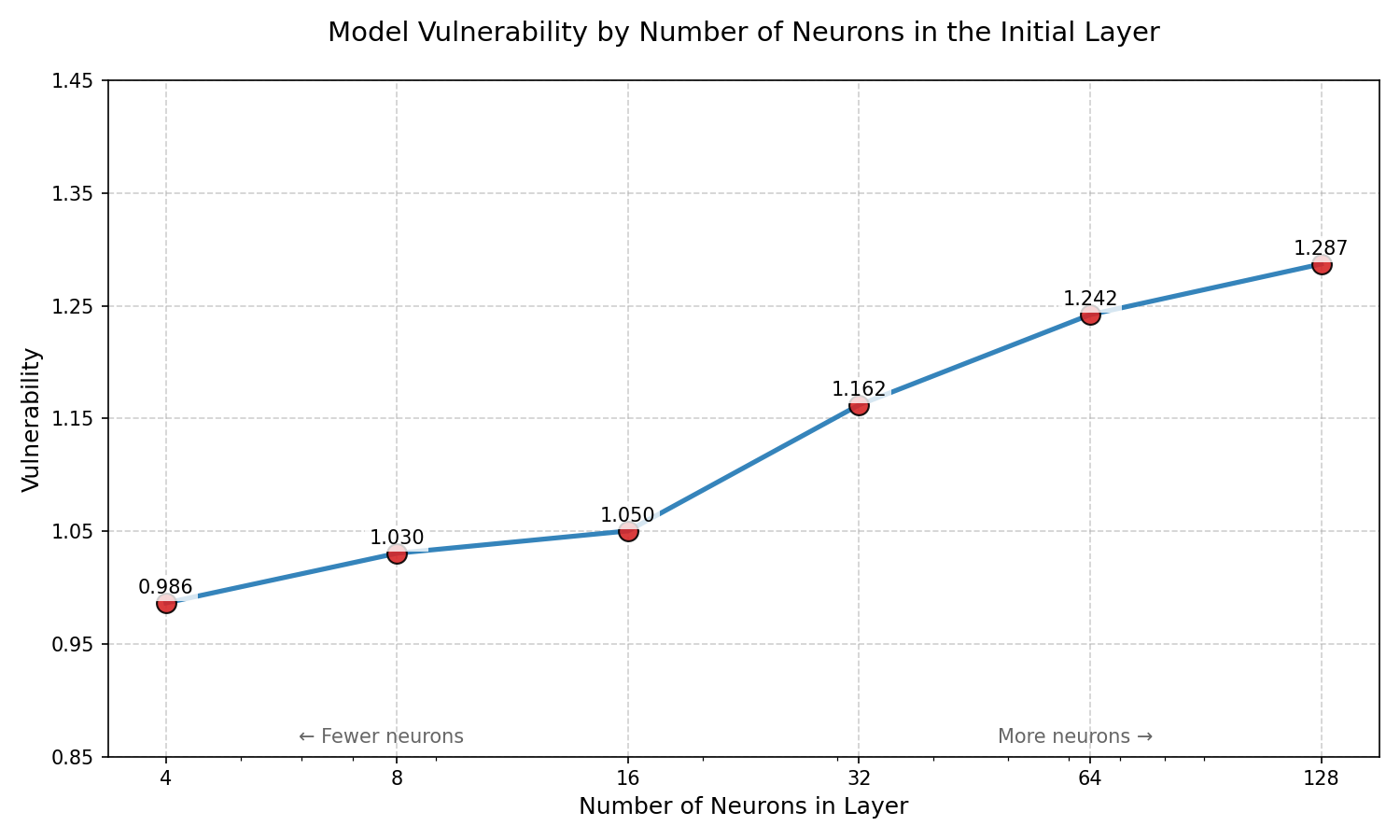}
    }
    
    \caption{Analysis of Contraceptive dataset showing vulnerability as a function of (a) number of layers and (b) number of neurons. This dataset has an admissible degree of imbalance between its classes. Bias in the data, coupled with an increasing number of neurons and an increasing number of layers, contributes to an increase in vulnerability.}
    \label{fig:churn_analysis}
\end{figure*}

\begin{figure*}[h]
    \centering
    \subfloat[Vulnerability vs. number of layers\label{fig:layers}]{
        \includegraphics[width=\textwidth]{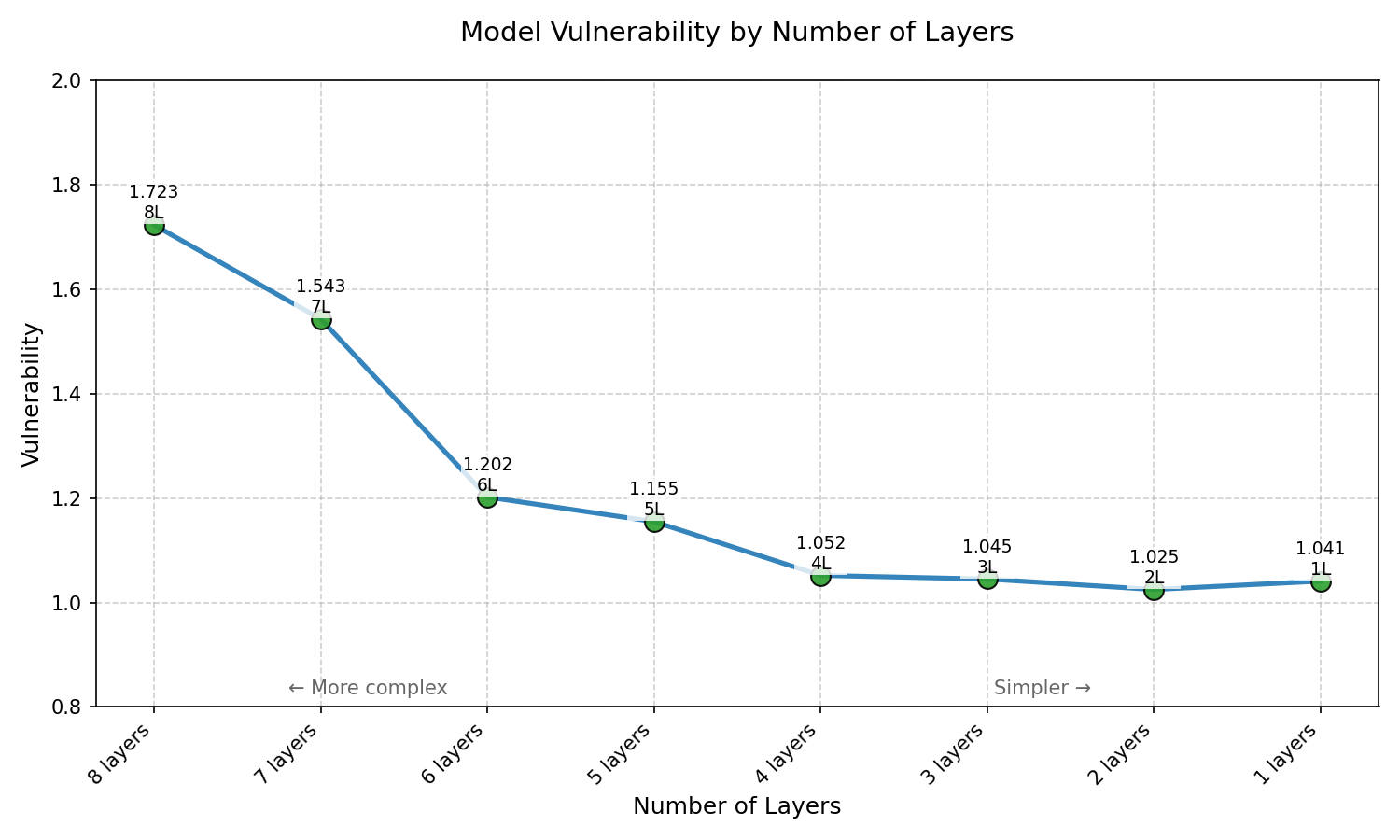}
    }
    
    \vspace{0.5cm} 
    
    \subfloat[Vulnerability vs. number of neurons\label{fig:neurons}]{
        \includegraphics[width=\textwidth]{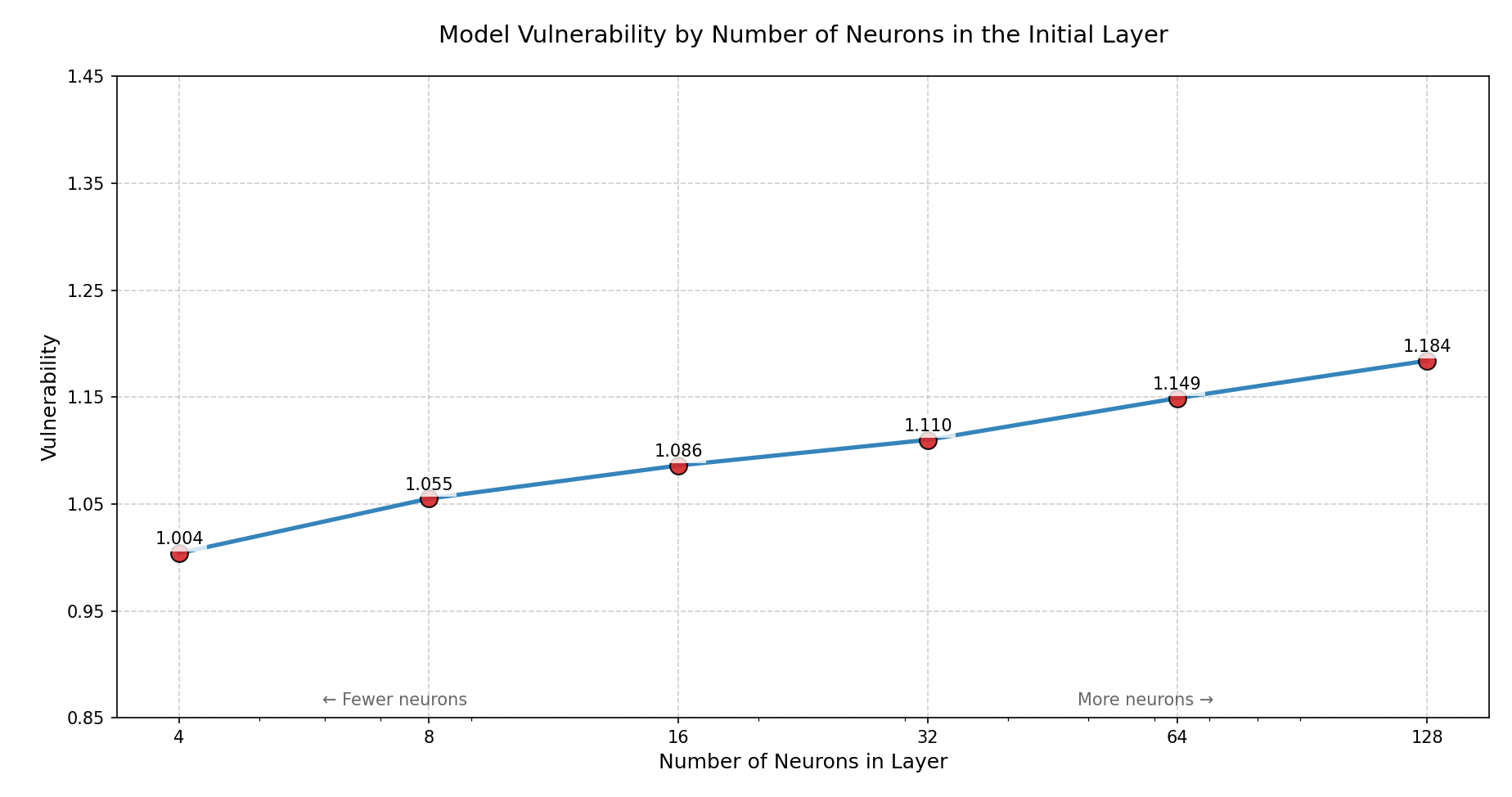}
    }
    
    \caption{Analysis of Customer dataset showing vulnerability as a function of (a) number of layers and (b) number of neurons. This dataset has a substantial amount of imbalance between its three classes. Bias in the data couples with the increasing number of neurons and the increasing number of layers, and renders a positive correlation of either with vulnerability.}
    \label{fig:churn_analysis}
\end{figure*}

\begin{figure*}[h]
    \centering
    \subfloat[Vulnerability vs. number of layers\label{fig:layers}]{
        \includegraphics[width=\textwidth]{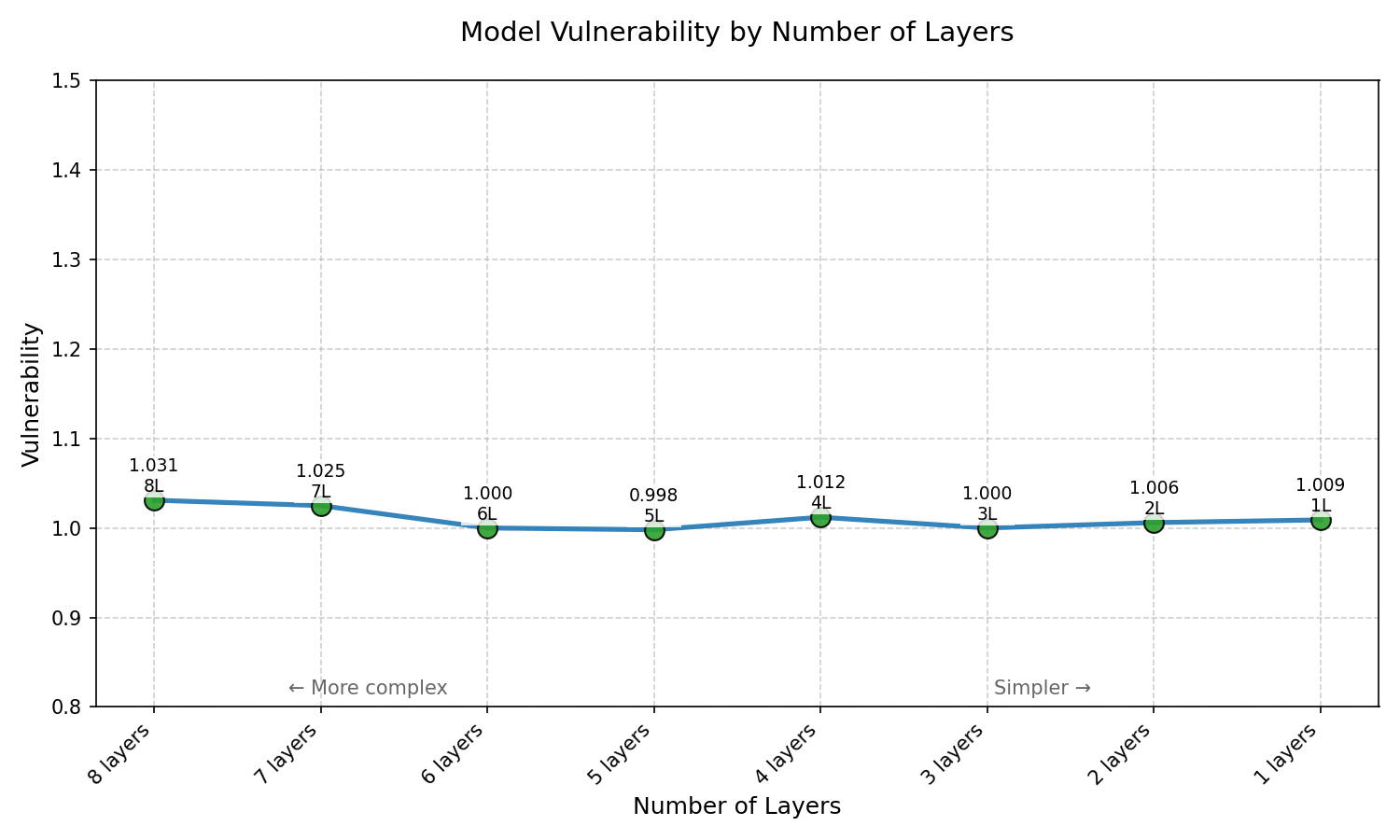}
    }
    
    \vspace{0.5cm} 
    
    \subfloat[Vulnerability vs. number of neurons\label{fig:neurons}]{
        \includegraphics[width=\textwidth]{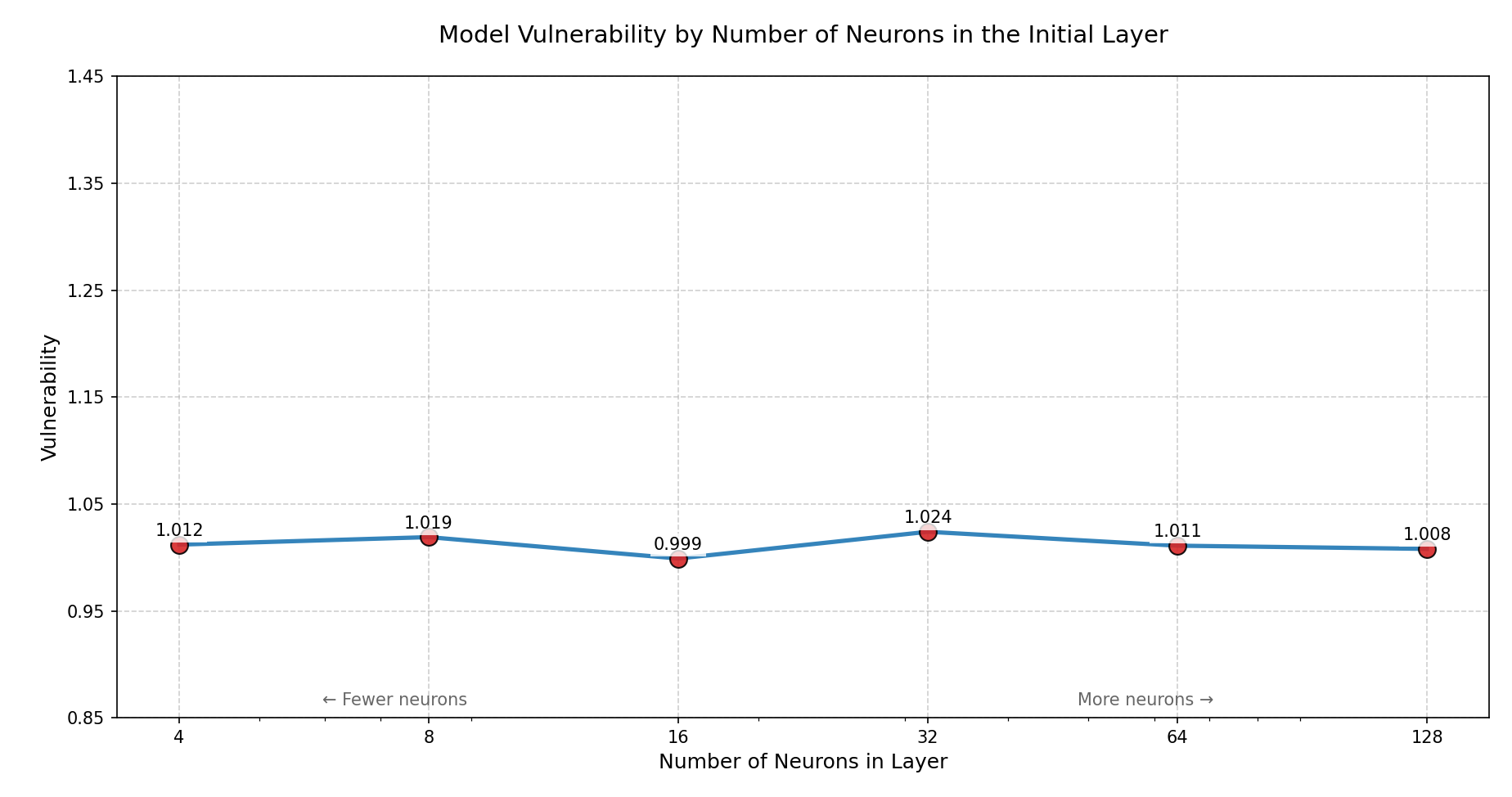}
    }
    
    \caption{Analysis of Body dataset showing vulnerability as a function of (a) number of layers and (b) number of neurons. This dataset is devoid of any class bias, rendering it a non-vulnerable classifier. The vulnerability of the classifier trained on this data has zero correlation with both the number of layers and the number of neurons.}
    \label{fig:churn_analysis}
\end{figure*}

\begin{figure*}[h]
    \centering
    \subfloat[Vulnerability vs. number of layers\label{fig:layers}]{
        \includegraphics[width=\textwidth]{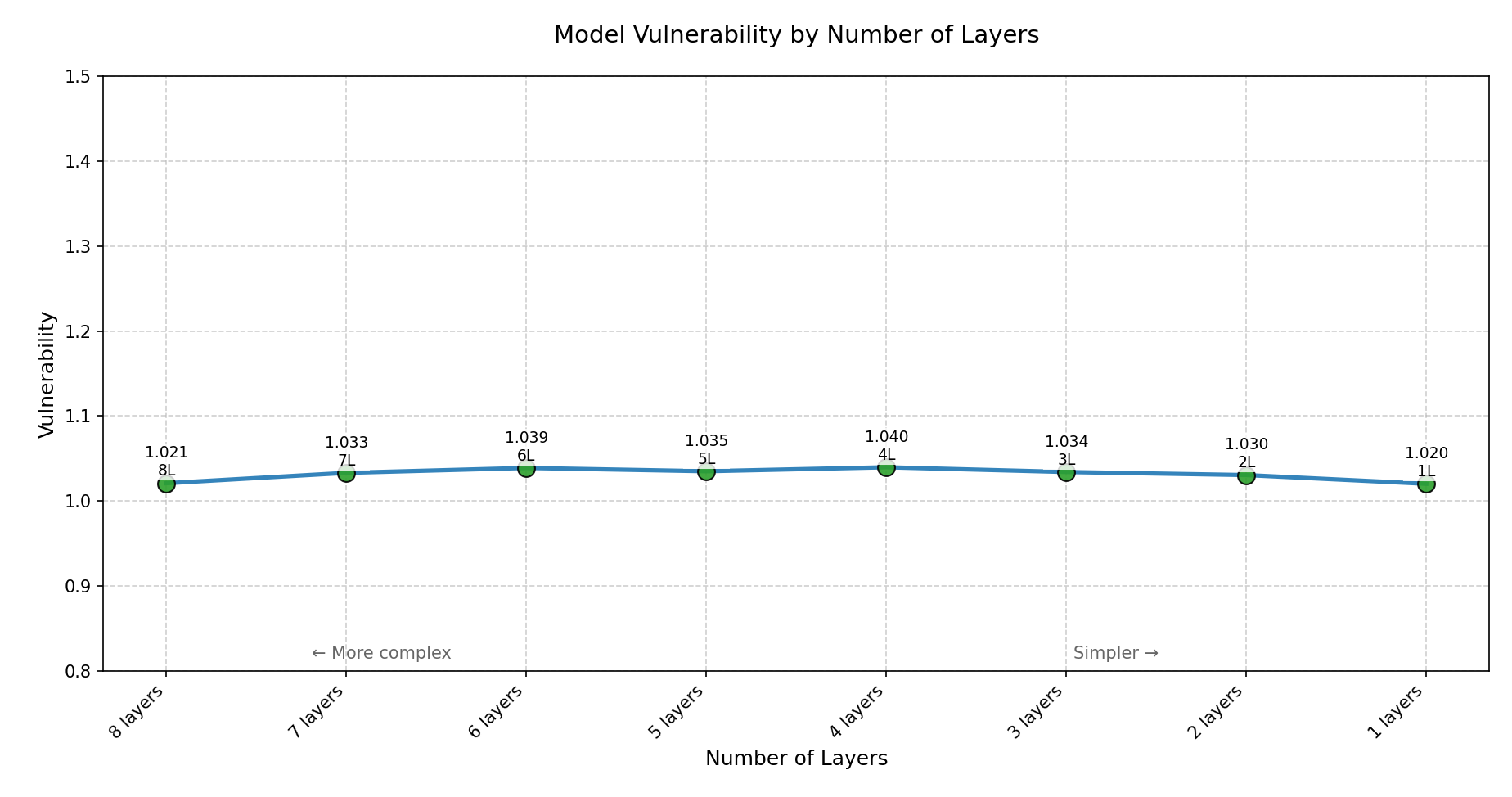}}

    \vspace{0.5cm} 
    
    \subfloat[Vulnerability vs. number of neurons\label{fig:neurons}]{
        \includegraphics[width=\textwidth]{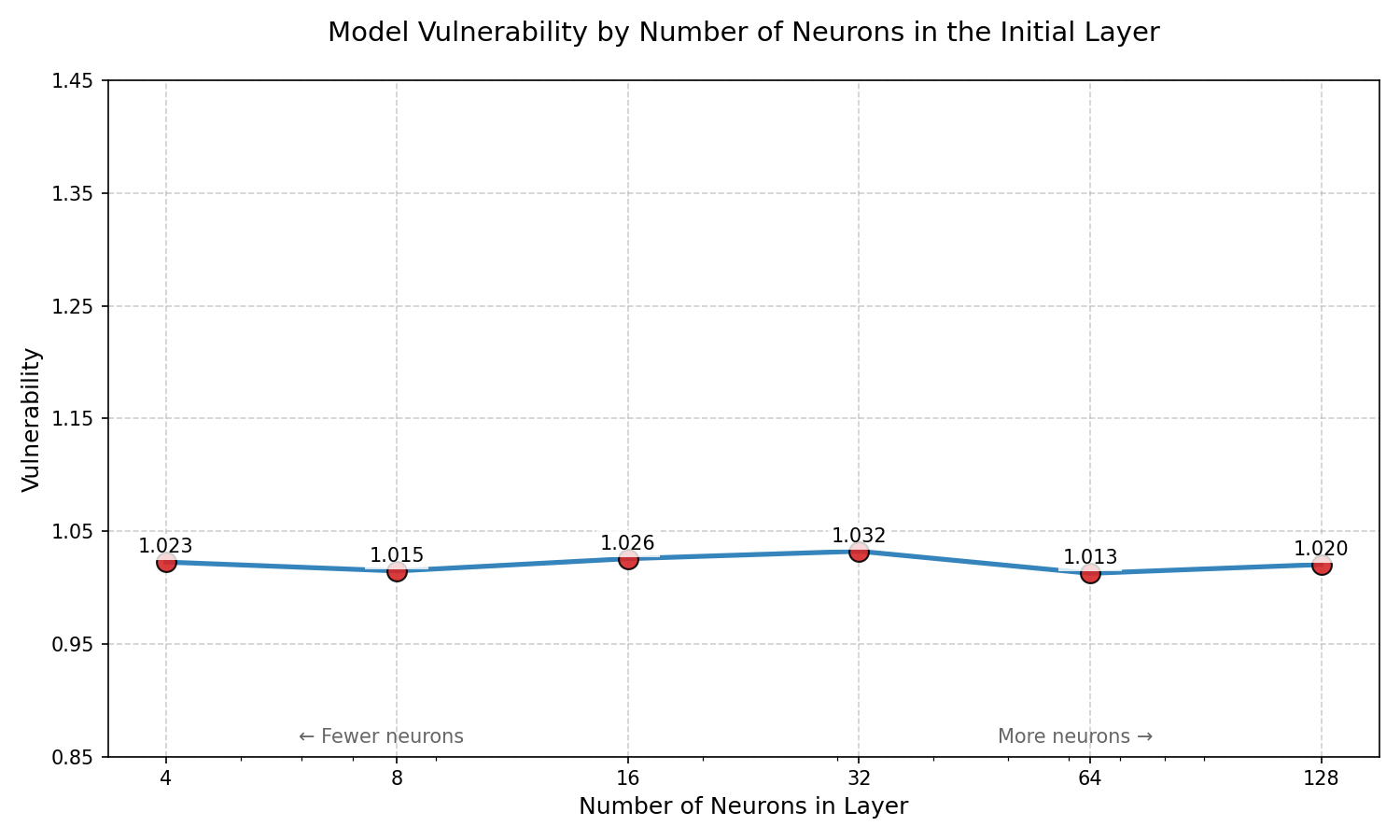}}

    \caption{Analysis of Optdigits dataset showing vulnerability as a function of (a) number of layers and (b) number of neurons. This dataset is devoid of any class bias, rendering it a non-vulnerable classifier. The vulnerability of the classifier trained on this data has zero correlation with both the number of layers and the number of neurons.}
    \label{fig:churn_analysis}
\end{figure*}

\FloatBarrier
\begin{table*}[h!]
\caption{Change of classifier's vulnerability with incorporation of data obfuscation for \emph{Churn} dataset}
\begin{tabular}{|l|r|r r|r r|}
\hline
\multicolumn{1}{|c|}{}&Original&\multicolumn{2}{|c|}{LSH}&\multicolumn{2}{|c|}{Hamming} \\
\hline
Datasets&$vul$&${vul}_{obf}$&${vul}_{change}$&${vul}_{obf}$&${vul}_{change}$ \\
\hline
Decision Tree&1.48&1.02&31.19$\%$&1.03&29.95$\%$\\

Random Forest&1.37&0.99&27.54$\%$&1.02&25.72$\%$\\

XGB&2.11&1.01&52.13$\%$&1.01&52.37$\%$\\

kNN&1.56&0.94&39.73$\%$&0.97&37.94$\%$\\

MLP-deep&2.43&1.11&54.32$\%$&1.08&55.56$\%$\\

SGD&1.04&1.01&2.86$\%$&0.97&6.59$\%$\\

Adaboost&0.97&0.95&2.09$\%$&0.99&-1.34$\%$\\

GNB&1.05&1.02&2.71$\%$&1.03&1.81$\%$\\

Log Reg&0.93&0.95&-2.01$\%$&1.02&-9.68$\%$\\

MLP-shallow&1.02&1.00&2.00$\%$&1.03&-0.33$\%$\\

\hline

\hline
\end{tabular}
\end{table*}
\FloatBarrier

\FloatBarrier
\begin{table*}[h!]
\caption{Change of classifier's vulnerability with incorporation of data obfuscation for \emph{Contraceptive} dataset}
\begin{tabular}{|l|r|r r|r r|}
\hline
\multicolumn{1}{|c|}{}&Original&\multicolumn{2}{|c|}{LSH}&\multicolumn{2}{|c|}{Hamming} \\
\hline
Datasets&$vul$&${vul}_{obf}$&${vul}_{change}$&${vul}_{obf}$&${vul}_{change}$ \\
\hline
Decision Tree&1.63&1.15&29.45$\%$&1.07&34.36$\%$\\

Random Forest&1.59&1.04&34.59$\%$&1.06&33.33$\%$\\

XGB&1.82&1.02&43.96$\%$&1.01&44.51$\%$\\

kNN&1.46&0.96&34.24$\%$&0.96&34.24$\%$\\

MLP-deep&1.40&0.98&30.00$\%$&1.03&26.43$\%$\\

SGD&0.97&1.00&-3.09$\%$&0.98&-1.03$\%$\\

Adaboost&1.04&0.97&6.73$\%$&1.00&4.00$\%$\\

GNB&1.00&0.98&2.00$\%$&1.03&-3.00$\%$\\

Log Reg&1.03&0.98&4.85$\%$&1.01&-1.94$\%$\\

MLP-shallow&1.03&1.05&-1.94$\%$&1.04&-0.97$\%$\\

\hline

\hline
\end{tabular}
\end{table*}
\FloatBarrier

\FloatBarrier
\begin{table*}[h!]
\caption{Change of classifier's vulnerability with incorporation of data obfuscation for \emph{Customer} dataset}
\begin{tabular}{|l|r|r r|r r|}
\hline
\multicolumn{1}{|c|}{}&Original&\multicolumn{2}{|c|}{LSH}&\multicolumn{2}{|c|}{Hamming} \\
\hline
Datasets&$vul$&${vul}_{obf}$&${vul}_{change}$&${vul}_{obf}$&${vul}_{change}$ \\
\hline
Decision Tree&1.37&1.02&25.53$\%$&1.01&25.57$\%$\\

Random Forest&1.42&1.17&30.00$\%$&1.03&27.46$\%$\\

XGB&1.76&1.00&43.07$\%$&1.07&39.36$\%$\\

kNN&1.36&1.00&26.75$\%$&1.03&24.71$\%$\\

MLP-deep&1.72&1.00&41.86$\%$&0.99&42.44$\%$\\

SGD&1.06&0.99&6.92$\%$&1.00&5.40$\%$\\

Adaboost&1.04&1.00&4.44$\%$&0.93&10.57$\%$\\

GNB&0.97&1.08&-11.23$\%$&0.94&3.10$\%$\\

Log Reg&1.05&0.97&7.64$\%$&1.04&1.21$\%$\\

MLP-shallow&1.03&1.02&00.97$\%$&1.01&1.94$\%$\\
\hline

\hline
\end{tabular}
\end{table*}
\FloatBarrier

\FloatBarrier
\begin{table*}[h!]
\caption{Change of classifier's vulnerability with incorporation of data obfuscation for \emph{Body} dataset}
\begin{tabular}{|l|r|r r|r r|}
\hline
\multicolumn{1}{|c|}{}&Original&\multicolumn{2}{|c|}{LSH}&\multicolumn{2}{|c|}{Hamming} \\
\hline
Datasets&$vul$&${vul}_{obf}$&${vul}_{change}$&${vul}_{obf}$&${vul}_{change}$ \\
\hline
Decision Tree&1.16&1.04&10.71$\%$&1.03&11.21$\%$\\

Random Forest&1.17&0.97&16.79$\%$&1.01&13.64$\%$\\

XGB&1.17&1.00&14.59$\%$&1.07&5.98$\%$\\

kNN&1.30&1.00&23.07$\%$&0.96&25.62$\%$\\

MLP-deep&1.03&1.01&1.94$\%$&1.00&42.91$\%$\\

SGD&1.03&1.01&1.69$\%$&0.98&4.45$\%$\\

Adaboost&1.02&1.00&1.66$\%$&1.04&-1.78$\%$\\

GNB&1.01&1.04&-2.77$\%$&1.01&0.51$\%$\\

Log Reg&1.00&1.05&-4.30$\%$&1.09&-7.53$\%$\\

MLP-shallow&1.01&1.02&-00.99$\%$&1.01&0.00$\%$\\
\hline

\hline
\end{tabular}
\end{table*}
\FloatBarrier

\FloatBarrier
\begin{table*}[h!]
\caption{Change of classifier's vulnerability with incorporation of data obfuscation for \emph{Optdigits} dataset}
\begin{tabular}{|l|r|r r|r r|}
\hline
\multicolumn{1}{|c|}{}&Original&\multicolumn{2}{|c|}{LSH}&\multicolumn{2}{|c|}{Hamming} \\
\hline
Datasets&$vul$&${vul}_{obf}$&${vul}_{change}$&${vul}_{obf}$&${vul}_{change}$ \\
\hline
Decision Tree&1.12&1.02&8.93$\%$&1.03&8.04$\%$\\

Random Forest&1.03&0.95&7.77$\%$&0.97&5.83$\%$\\

XGB&1.03&0.99&3.88$\%$&1.07&6.79$\%$\\

kNN&1.01&1.03&-1.98$\%$&1.02&-0.99$\%$\\

MLP-deep&1.05&1.03&1.05$\%$&0.98&6.67$\%$\\

SGD&1.00&0.97&-3.00$\%$&0.98&-2.00$\%$\\

Adaboost&1.02&1.01&0.98$\%$&1.02&0.00$\%$\\

GNB&1.02&1.05&-2.94$\%$&1.04&-1.96$\%$\\

Log Reg&1.02&1.05&-2.94$\%$&1.05&-2.94$\%$\\

MLP-shallow&1.00&1.02&-2.00$\%$&0.97&3.00$\%$\\
\hline

\hline
\end{tabular}
\end{table*}
\FloatBarrier

\subsection{Experiment 2:} Tables 1-5 report the change of vulnerability on adopting data obfuscation protocols. Tables 1, 2, 3, 4, and 5 are dedicated to Churn, Customer, and Body. For each classifier-dataset pair, five values are reported: vulnerability on the original dataset, vulnerability on LSH-encoding-based obfuscated data, the corresponding change with respect to the original dataset, vulnerability on Hamming-encoding-based obfuscated data, and the corresponding change with respect to the original dataset. The range of ${vul}_{change}$ is $11.21\%$ to $52.37\%$ for vulnerable classifiers. The results demonstrate the two obfuscation schemas' ability to remove the training dataset's footprints from the classifiers. Both methods have rendered vulnerability scores of approximately 1 for all classifier-dataset pairs. A vulnerability score equal to 1 indicates equal performance (accuracy or average precision) on the training and test data partition.

The vulnerability change for the non-vulnerable classifiers is $-9.68\%$ to $10.57\%$. Note that the incorporation of obfuscation schemes in non-vulnerable classifiers (SGD, Adaboost, Gaussian Naive Bayes, Logistic Regression, and MLP-shallow) has rendered negative ${vul}_{change}$ values in some cases. These results show that data obfuscation is inessential for non-vulnerable classifiers.

\FloatBarrier
\begin{table*}[h!]
\caption{Privacy-performance tradeoff obtained for \emph{Churn} dataset on the vulnerable classifiers.}
\begin{tabular}{|l|r|r|}
\hline
Datasets&\multicolumn{2}{|c|}{privacy-performance trade-off}\\
\hline
&LSH encoding&Hamming encoding \\
\hline
Decision Tree&1.13&1.13\\

Random Forest&0.94&0.92\\

XGB&1.14&1.28\\

kNN&1.30&1.29\\

MLP-deep&1.19&1.16\\




\hline

\hline
\end{tabular}
\end{table*}

\FloatBarrier

\FloatBarrier
\begin{table*}[h!]
\caption{Privacy-performance tradeoff obtained for \emph{Contraceptive} dataset on the vulnerable classifiers.}
\begin{tabular}{|l|r|r|}
\hline
Datasets&\multicolumn{2}{|c|}{privacy-performance trade-off}\\
\hline
&LSH encoding&Hamming encoding \\
\hline
Decision Tree&1.08&1.10\\

Random Forest&0.97&0.95\\

XGB&1.18&1.21\\

kNN&1.33&1.23\\

MLP-deep&1.22&1.20\\




\hline

\hline
\end{tabular}
\end{table*}

\FloatBarrier

\FloatBarrier
\begin{table*}[h!]
\caption{Privacy-performance tradeoff obtained for \emph{Customer} dataset on the vulnerable classifiers.}
\begin{tabular}{|l|r|r|}
\hline
Datasets&\multicolumn{2}{|c|}{privacy-performance trade-off}\\
\hline
&LSH encoding&Hamming encoding \\
\hline
Decision Tree&1.20&1.19\\

Random Forest&1.17&1.11\\

XGB&1.46&1.30\\

kNN&1.35&1.26\\

MLP-deep&1.27&1.32\\




\hline

\hline
\end{tabular}
\end{table*}

\FloatBarrier

\FloatBarrier
\begin{table*}[h!]
\caption{Privacy-performance tradeoff obtained for \emph{Body} dataset on the vulnerable classifiers.}
\begin{tabular}{|l|r|r|}
\hline
Datasets&\multicolumn{2}{|c|}{privacy-performance trade-off}\\
\hline
&LSH encoding&Hamming encoding \\
\hline
Decision Tree&0.80&0.42\\

Random Forest&0.79&0.38\\

XGB&0.78&0.38\\

kNN&0.61&0.65\\

MLP-deep&0.85&0.62\\




\hline

\hline
\end{tabular}
\end{table*}
\FloatBarrier

\FloatBarrier
\begin{table*}[h!]
\caption{Privacy-performance tradeoff obtained for \emph{Optdigits} dataset on the vulnerable classifiers.}
\begin{tabular}{|l|r|r|}
\hline
Datasets&\multicolumn{2}{|c|}{privacy-performance trade-off}\\
\hline
&LSH encoding&Hamming encoding \\
\hline
Decision Tree&0.76&0.40\\

Random Forest&0.78&0.39\\

XGB&0.74&0.41\\

kNN&0.65&0.47\\

MLP-deep&0.80&0.64\\




\hline

\hline
\end{tabular}
\end{table*}
\FloatBarrier

\subsection{Experiment 3:} Tables 6-10 report the privacy-performance trade-off achieved by incorporating data obfuscation. We have used the scores (privacy and performance) on original datasets as the baseline. Hence, following Equation 13, the baseline is one. A value higher than 1 denotes the feasible admissibility of an obfuscation protocol while curtailing the footprints of training data. Data from Table 6 shows that on \emph{Churn}, LSH-encoding and Hamming's encoding have rendered privacy-performance trade-off value $>1$ for Decision Tree, XGB, kNN, and MLP-deep classifiers. For Random Forest, the value is slightly less than one in both cases. On \emph{Contraceptive} dataset, both the techniques (LSH and Hamming) have rendered a privacy-performance trade-off close to 1 or above for all the classifiers. For the \emph{Customer} dataset, both obfuscation schemes have rendered privacy-performance trade-off value greater than one on all classifiers.

{The scenario on \emph{Body} and \emph{Optdigits} datasets is slightly different from the remaining three datasets. The obfuscation protocols have failed to render an admissible privacy-performance trade-off across all the classifiers. LSH encoding has rendered scores around 0.8 for Decision Tree, Random Forest, XGB, and MLP-deep classifiers. On the kNN classifier, the score is further lower at 0.65. On Hamming encoding, the scores range from 0.38 to 0.63. These scores show that despite improving upon footprint mitigation, data obfuscation is not a practical solution for this dataset. The balanced nature of the classes in these datasets is to be accounted for. The uniformity of the balanced classes gets further randomized through obfuscation, and a sub-optimal classifier is trained. These classifiers produce outputs with significant randomness and low veracity across both training and test partitions. This leads to a substantial fall in privacy-performance trade-off values.

\section{Discussion}
In this section, we present the key findings of our study and their implications for Information Systems research and business applications. Additionally, we outline the study's limitations and highlight open challenges that warrant further investigation in future research.

\subsection{Key findings of this study}
This study identifies and conceptualizes passive security and privacy vulnerabilities inherent in predictive models. Our analysis highlights that these vulnerabilities are not merely theoretical concerns but practical issues that, if left unaddressed, could lead to severe data breaches, conflicting with Article 17 of the GDPR. Notably, our findings indicate that classifier vulnerability arises from two key factors: the training/methodology of classifiers and distributional shifts in data classes. Quantitative and qualitative shifts in data distributions introduce bias, which becomes embedded in the classifier, increasing the likelihood of training data exposure.

To mitigate this risk, we investigate data obfuscation, a common technique for preserving data privacy. Our empirical findings confirm its effectiveness in reducing classifier vulnerability. Furthermore, our study on the privacy-performance trade-off demonstrates that data obfuscation techniques can be employed in real-world applications where both computational efficiency and data privacy are critical.

\subsection{Implications for IS Research and Business}
User and data privacy are crucial considerations for businesses. Existing Information Systems (IS) research primarily addresses privacy breaches resulting from active intrusions and deliberate attacks. In contrast, this study makes several key contributions by addressing passive privacy threats:\\

\noindent \emph{Conceptualizing Classifier Vulnerabilities:} 
This research highlights how training data can be inherently embedded in predictive models, leading to unintended privacy risks. Such vulnerabilities conflict with Article 17 of the GDPR, making non-compliance potentially costly for businesses. In addition to conceptualizing vulnerability, we introduce a framework for its quantification, allowing practitioners to assess and compare classifiers based on their susceptibility to data leakage. This framework assists: (a)Users and practitioners must identify vulnerable classifiers to make informed deployment choices in sensitive applications; (b) Model designers in understanding and mitigating the root causes of classifier vulnerability.

\noindent \emph{Proposing a Remedial Framework:} We explore a remedial strategy based on data obfuscation, testing its effectiveness through controlled experiments. The empirical results offer multiple options for mitigating classifier vulnerability: (a) Selecting a classifier that inherently minimizes data exposure; (b) Adjusting the training process to reduce vulnerability; and (c) Applying data obfuscation techniques to enhance privacy protection.

\noindent \emph{Balancing Privacy and Performance:} While privacy is paramount, businesses must also maintain model performance. Our introduction of a privacy-performance trade-off index enables practitioners to assess the feasibility of data obfuscation techniques, helping them balance privacy preservation with model effectiveness before deployment. To be usable in business ventures, the data obfuscation protocols have to be minimally corruptive to the systems' performances. The quantification of \emph{privacy-performance trade-off} allows the integration of the reduction in vulnerability and loss in performance under a single head. It enables the users to decide whether to adopt data obfuscation protocols before deploying the models.\\

\subsection{Limitations and Future Research Directions}
Despite its contributions, this study has certain limitations that warrant further investigation:\\

\noindent {\emph{Exploring Additional Forms of Passive Vulnerability:}}
This study focuses on classifier vulnerability arising from performance differences between training and test data \cite{bozorgi2010beyond}. Future research should examine cases where classifiers produce similar performance metrics across both datasets, which may introduce alternative privacy risks.

\noindent {\emph{Enhancing Performance with Privacy Measures:}
While data obfuscation reduces vulnerability, it can also degrade model performance in some cases \cite{melzi2024overview}. Future studies should explore complementary techniques—such as adversarial training, differential privacy, or hybrid obfuscation strategies—to maintain model accuracy while preserving privacy.}\\

\noindent By addressing these open challenges, future research can further advance privacy-preserving methodologies in AI-driven decision-making systems.

\section{Conclusion}
This research investigates and theorizes a fundamental issue in classification models—footprints of training data—when applied in decision-making and policy implementation. This concern is particularly critical for the Information Systems (IS) community, as it directly challenges the Right to Erasure under GDPR. The findings of this study confirm that data footprints are embedded in a substantial class of classifiers, raising serious concerns about privacy and security in AI-driven decision systems.

Our theoretical and empirical analyses demonstrate that data imbalance and distributional shifts contribute significantly to classifier vulnerability. Bias introduced through these factors increases the risk of training data exposure, making it imperative for IS researchers and practitioners to approach classifier selection and training methodologies with caution. Addressing these vulnerabilities requires an informed balance between accuracy and privacy preservation. The quantification of vulnerability and the introduction of a \emph{privacy-performance trade-off framework} provide actionable insights for both researchers and practitioners. This framework enables users to assess classifier risks effectively, aiding in the deployment of models that align with privacy regulations while maintaining decision-making efficacy.

For the IS community, these findings underscore the necessity of integrating privacy-aware learning frameworks into AI-based decision-making systems. Future research should explore novel techniques for mitigating data footprints, such as adversarial privacy preservation, differential privacy, and fairness-aware machine learning \cite{ferrara2024fairness}. From a business perspective, organizations deploying AI models must carefully evaluate the trade-offs between privacy risks and model accuracy, ensuring compliance with legal frameworks while maintaining competitive performance. By adopting responsible AI practices, businesses can safeguard user privacy without compromising the reliability and effectiveness of predictive models \cite{deshpande2022responsible}.

This study contributes to the ongoing IS field discourse by offering theoretical foundations and empirical evidence on the critical intersection of privacy, accuracy, and vulnerability in classification models. The call for stronger privacy-preserving AI methodologies is a legal imperative and a business necessity in an increasingly data-driven world.

\section{Declarations}
\begin{itemize}
   
\item Ethics approval and consent to participate: No human or living subjects are used in the study.
 \item Consent for publication: Given by both authors,
\item Availability of data and material: Data will be made available on request.
\item Funding: None         
\item Authors' contributions: Payel Sadhukhan -- conceptualization, experiments, and writing, Tanujit Chakraborty -- Writing, editing, and refinement.
\item {Acknowledgements: We acknowledge the editors and learned reviewers for their insights and suggestions in improving the quality of this manuscript.}
\item Competing Interests: None.
\end{itemize}


\bibliographystyle{splncs04}
\bibliography{privacy-ref}

\end{document}